
\documentclass[conference,10pt]{IEEEtran}
\pagestyle{plain}
%

%
%

\usepackage{setspace}

%
\ifCLASSINFOpdf
\else
\fi
\hyphenation{op-tical net-works semi-conduc-tor}
\usepackage{amsthm}
\usepackage{amsmath}
\usepackage{amssymb}
\usepackage{txfonts}
\usepackage{xcolor,paralist,times}
\usepackage{graphicx}
\usepackage{epstopdf}
\usepackage{multirow}
\usepackage{colortbl}
\usepackage{epstopdf}
\usepackage{hyperref}
\hypersetup{
    colorlinks,
    citecolor=brown,
    filecolor=black,
    linkcolor=black,
    urlcolor=brown
}



\newcommand\set[1]{\{#1\}}
\newcommand\sig{\mathit{sig}}
\newcommand\acc{\mathit{acc}}
\newcommand\rej{\mathit{rej}}
\newtheorem{Thm}{Theorem}
\newtheorem{Lem}[Thm]{Lemma}
\newtheorem{Def}{Definition}

\newtheorem{Exm}{Example}

\newtheorem{Cor}[Thm]{Corollary}

\newtheorem{Rule}{Rule}
\usepackage{algorithm}
\usepackage{algorithmic}
\usepackage{url}
\usepackage{hyperref}
\usepackage{breakurl}

\begin{document}
%
\title{Making Streett Determinization Tight}

\author{\IEEEauthorblockN{Cong Tian}
\IEEEauthorblockA{
ICTT and ISN Laboratory \\ Xidian
University\\ Xi'an, 710071, P.R. China\\
ctian@mail.xidian.edu.cn}
\and
\IEEEauthorblockN{Wensheng Wang}
\IEEEauthorblockA{
ICTT and ISN Laboratory \\ Xidian
University\\ Xi'an, 710071, P.R. China\\
wswang@stu.xidian.edu.cn}

\and
\IEEEauthorblockN{Zhenhua Duan}
\IEEEauthorblockA{
ICTT and ISN Laboratory \\ Xidian
University\\ Xi'an, 710071, P.R. China\\
zhhduan@mail.xidian.edu.cn}

}


%


\maketitle

\begin{abstract}
Optimal determinization construction of Streett automata is an important research problem because it is indispensable in numerous applications such as decision problems for tree temporal logics, logic games and system synthesis.
This paper presents a transformation from nondeterministic Streett automata (NSA) with $n$ states and $k$ Streett pairs to equivalent deterministic Rabin transition automata (DRTA) with $n^{5n}(n!)^{n}$ states, $O(n^{n^2})$ Rabin pairs for $k=\omega(n)$ and $n^{5n}k^{nk}$ states, $O(k^{nk})$ Rabin pairs for $k=O(n)$. This improves the state of the art Streett determinization construction with $n^{5n}(n!)^{n+1}$ states, $O(n^2)$ Rabin pairs and $n^{5n}k^{nk}n!$ states, $O(nk)$ Rabin pairs, respectively. Moreover, deterministic parity transition automata (DPTA) are obtained with $3(n(n+1)-1)!(n!)^{n+1}$ states, $2n(n+1)$ priorities for $k=\omega(n)$ and $3(n(k+1)-1)!n!k^{nk}$ states, $2n(k+1)$ priorities for $k=O(n)$, which improves the best construction with $n^{n}(k+1)^{n(k+1)}(n(k+1)-1)!$ states, $2n(k+1)$ priorities.
Further, we prove a lower bound state complexity for determinization construction from NSA to deterministic Rabin (transition) automata  i.e. $n^{5n}(n!)^{n}$ for $k=\omega(n)$ and $n^{5n}k^{nk}$ for $k=O(n)$, which matches the state complexity of the proposed  determinization construction.
Besides, we put forward a lower bound state complexity for determinization construction from NSA to deterministic parity (transition) automata i.e. $2^{\Omega(n^2 \log n)}$ for $k=\omega(n)$ and $2^{\Omega(nk \log nk)}$ for $k=O(n)$, which is the same as the state complexity of the proposed determinization construction in the exponent.
\end{abstract}

\begin{IEEEkeywords}
Streett automata, Rabin automata, determinization, state complexity, lower bound.
\end{IEEEkeywords}

%
\IEEEpeerreviewmaketitle

\section{Introduction}

Streett automata \cite{Streett82}  are nearly the same as B\"{u}chi automata \cite{Buchi62} except for the acceptance condition.
They are exponentially more succinct than B\"{u}chi automata in encoding infinite behaviors of systems \cite{SV89}. As a result, Streett automata have an advantage in modeling  behaviors of concurrent and reactive systems \cite{CZ12}.

Determinization is one of the fundamental notions in automata theory. Given a nondeterministic automaton $\mathcal{A}$, determinization of $\mathcal{A}$ is the construction of another deterministic automaton $\mathcal{B}$ that recognizes the same language as $\mathcal{A}$ does. As for Streett automata,  determinization constructions have been investigated for decades.
In 1992, Safra introduced the first determinization construction for nondeterministic Streett automata (NSA) by using an innovative data structure known as \emph{Streett Safra trees} \cite{Safra92}. The states of the resulting deterministic automata are not sets of states, but tree structures. Safra's construction transforms a NSA with $n$ states and $k$ Streett pairs into a deterministic Rabin automaton (DRA) with $12^{n(k+1)}n^{n}(k+1)^{n(k+1)}(n(k+1))^{n(k+1)}$ states and $n(k+1)$ Rabin pairs. In 2007, Piterman \cite{Piterman07} presented a tighter construction via \emph{compact Streett Safra trees} which are obtained by using a dynamic naming technique throughout the Streett Safra tree construction. With compact Streett Safra trees, a NSA can be transformed into an equivalent deterministic parity automaton (DPA) with $2n^{n}(k+1)^{n(k+1)}(n(k+1))!$ states and $2n(k+1)$ priorities; or a DRA with the same state complexity and $n(k+1)$ Rabin pairs. The key advantage of Piterman's determinization  is the resulting DPA which is easier to manipulate. In 2012, Cai and Zhang presented the construction of an equivalent DRA with $n^{7n}(n!)^{n+1}$ states and $O(n^2)$ Rabin pairs for $k=\omega(n)$, and $n^{5n}k^{n(k+2)}n!$ states and $O(nk)$ Rabin pairs for $k=O(n)$ \cite{CZ12,CZ13}. Their construction is based on another data structure, namely, \emph{$\mu$-Safra trees for Streett determinization}, which reduces the redundancy of index labels and utilizes a batch-mode naming scheme.


As for the state lower bound of Streett determinization, it has also been investigated. For a NSA with $n$ states and $k$ Streett pairs, Cai and Zhang proved a lower bound of Streett complementation which is $2^{\Omega(n\log n+nk\log k)}$ states  for $k=O(n)$ and $2^{\Omega(n^2\log n)}$ states for $k=\omega(n)$ \cite{CZ11(2)}. It indicates that the lower bound state complexity for determinization construction from NSA to DR(T)A is no smaller than (maybe very close to) $2^{\Omega(n\log n+nk\log k)}$ for $k=O(n)$ and $2^{\Omega(n^2\log n)}$ for $k=\omega(n)$. Besides, for the lower bound state complexity for determinization construction from NSA to deterministic Streett (transition) automata (DS(T)A) or DP(T)A, a result was given in \cite{Mic88,Lod99} with $2^{\Omega(n\log n)}$ states. Later, Yan \cite{Yan06} obtained the same result via full automata technique. Since then, the lower bound state complexity $2^{\Omega(n\log n)}$ for determinization construction  from NSA to DS(T)A or DP(T)A has never improved. There is a gap between the upper and lower bounds state complexity for determinization construction from NSA to DR(T)A, DS(T)A, or DP(T)A.
Therefore, it is interesting to make the state complexity for Streett determinization construction tight or tighter.


In this paper, we reconstruct {$\mu$-Safra trees} as \emph{H-Safra trees for Streett determinization} by changing the name on each node of the tree. As a consequence, an improved construction of DRTA is obtained with state complexity being $n^{5n}(n!)^{n}$ for $k=\omega(n)$, and $n^{5n}k^{nk}$ for $k=O(n)$.
Then, LIR-H-Safra trees for Streett determinization are presented by adding \emph{later introduction records}, which records the generation order of each node, to H-Safra trees. Based on LIR-H-Safra trees, an improved construction of DPTA is obtained with state complexity being $3(n(n+1)-1)!(n!)^{n+1}$ for $k=\omega(n)$, and $3(n(k+1)-1)!n!k^{nk}$ for $k=O(n)$.
We prove the lower bound state complexity for determinization construction from NSA to DR(T)A by the language game namely $L$-game \cite{CZ09} which matches the state complexity of the proposed determinization construction by H-Safra trees.
Moreover, an improved lower bound state complexity $2^{\Omega(n^2 \log n)}$ for $k=\omega(n)$ and $2^{\Omega(nk \log nk)}$ for $k=O(n)$ for determiniztion construction from NSA to DP(T)A is proposed based on $L$-game. It is the same as the determinization construction by LIR-H-Safra trees in the exponent.

The rest of the paper is organized as follows.
The next section briefly introduces automata over infinite words.
In Section \ref{Sec:history}, Cai and Zhang's NSA-to-DRA determinization based on {$\mu$-Safra trees} is revisited.
Our new data structures, {H-Safra trees for Streett determinization} and {LIR-H-Safra trees for Streett determinization}, are presented in Section~\ref{sec:datastructure}.
In the sequel, the improved NSA-to-DRTA and NSA-to-DPTA determinization constructions are presented in Section~\ref{sec:determinization}.
{Section ~\ref{sec:lower bound} studies the lower bound of the determinization construction.}

\section{Automata}\label{Sec:automata}
Let $\Sigma$ be a finite set of symbols called an alphabet.
 An infinite word $\alpha$ is an infinite
sequence of symbols from $\Sigma$. $\Sigma^\omega$ is the set of all
infinite words over $\Sigma$.  We present $\alpha$ as a function
$\alpha:\mathbb{N}\rightarrow \Sigma$, where $\mathbb{N}$ 
is the set of non-negative integers. Thus, $\alpha(i)$ denotes the
letter appearing at the $i^{th}$ position of the word. In general,
$\mathsf{Inf}(\alpha)$ denotes the set of symbols from $\Sigma$
which occur infinitely often in $\alpha$. Formally,
$\mathsf{Inf}(\alpha)=\{\sigma\in \Sigma\mid\exists^\omega n\in
\mathbb{N}:\alpha(n)=\sigma\}$. Note that $\exists^\omega n \in
\mathbb{N}$ means that there exist infinitely many $n$ in $\mathbb{N}$.

\begin{Def}[Automaton]\rm
An automaton over $\Sigma$ is a tuple $A=(\Sigma, Q,\delta,
Q_0,\lambda)$, where $Q$ is a 
non-empty, finite set of states, $Q_0\subseteq Q$ is a set of initial
states, ${\delta} \subseteq
Q\times \Sigma\times Q$ 
is a transition relation, and $\lambda$ is an acceptance condition.
\end{Def}

A run $\rho$ of an automaton $A$ on an infinite word $\alpha$ is an
infinite sequence $\rho:\mathbb{N}\rightarrow Q$ such that
$\rho(0)\in Q_0$ and for all $i\in \mathbb{N}$,
$(\rho(i),\alpha(i),\rho(i+1))\in {\delta}$.  $A$ is said to be
deterministic if $Q_0$ is a singleton, and for any $(q,\sigma,q')\in$
$\delta$, there exists no $(q,\sigma,q'')\in \delta$ such that
$q''\not=q'$, and nondeterministic otherwise. Similar to infinite
words, $\mathsf{Inf}(\rho)$ denotes the set of states from $Q$ which
occur infinitely often in $\rho$. Formally,
$\mathsf{Inf}(\rho)=\{q\mid\exists^\omega n\in
\mathbb{N}:\rho(n)=q\}$.

Several acceptance conditions are studied in literature. We
present three of them here:

\begin{itemize}

\item {Streett, where $\lambda=\{\langle G_1,B_1\rangle,\langle G_2,B_2\rangle,\ldots,\langle G_k,B_k\rangle\}$ with
$G_i$, $B_i\subseteq Q$.  $\rho$ is accepted iff for all $1\leq i\leq k$, we
have that $\mathsf{Inf}(\rho)\cap G_i\neq\emptyset$ or
$\mathsf{Inf}(\rho)\cap B_i=\emptyset$.}

\item Rabin, where $\lambda=\{\langle A_1,R_1\rangle, \langle A_2, R_2\rangle, \ldots,
\langle A_{k},R_k\rangle\}$ with $A_i$, $R_i\subseteq Q$.
$\rho$ is accepted iff for some $1\leq i\leq k$, we have that
$\mathsf{Inf}(\rho)\cap A_i\not=\emptyset$ and
$\mathsf{Inf}(\rho)\cap R_i=\emptyset$.

{\item Parity, where $\lambda=\{\lambda_1,\lambda_2,\ldots,\lambda_{2k}\}$ with $\lambda_1\cup\lambda_2\cup\ldots\cup\lambda_{2k}=Q$. $\rho$ is accepted iff the minimal index $i$ for which $\mathsf{Inf}(\rho)\cap\lambda_i\neq\emptyset$ is even.}
\end{itemize}

An automaton accepts a word if it has an accepting run on it. The
accepted language of an automaton $A$, denoted by $L(A)$, is the set
of words that $A$ accepts.

We denote the different types of automata by three letter acronyms
in $\{D,N\}\times \{S,R,P\}\times \{A\}$. The first letter stands
for the branching mode of the automaton (deterministic or
nondeterministic); the second letter stands for the acceptance
condition type (Streett, Rabin, or parity); and the third letter
indicates automata. While acceptance
condition of an ordinary automaton is defined on states, the
acceptance condition of a transition automaton is defined on
transitions of the automaton. {Accordingly, with respect to each type
of ordinary automata, we also have its transition version.}

\section{Determinization via $\mu$-Safra Trees for Streett}\label{Sec:history}

This section revisits the determinization construction via {$\mu$-Safra trees for Streett}
\cite{CZ12}. {{For any positive integer $m\in\mathbb{N}$, we use $[m]$ to denote the set $\{1,2,\ldots,m\}$.}}

\subsection{$\mu$-Safra Trees for Streett Determinization}
$\mu$-Safra trees for Streett determinization, presented by Cai and Zhang in 2012 \cite{CZ12}, are obtained from Streett Safra trees \cite{Safra92}. A $\mu$-Safra tree for Streett determinization is a labelled ordered tree. A tree is \emph{ordered} just if the nodes are partially ordered by \emph{older-than} relation.
Compared with Streett Safra trees, the characteristic of $\mu$-Safra trees for Streett determinization is a \emph{batch-mode naming scheme $M_b$} for nodes.

For an ordered tree, a leaf corresponds to a \emph{left spine}. A left spine is a maximal path $\tau_1,\tau_2,\ldots,\tau_m$ such that $\tau_m$ is a leaf, for any $i\in\{2,\ldots,m\}$, $\tau_i$ is the left-most child of $\tau_{i-1}$, and $\tau_1$, called the head of the left spine, is not a left-most child of its parent \cite{CZ12}. We arrange all left spines with consecutive integers starting from $1$ as names of left spines. Each node is on exactly one left spine.
For the sibling nodes, the name of the left spine, which contains the left-most sibling, is smaller than the others.
With this basis, every node can be named uniquely. Nodes in a left spine named $ls$, from the head downwards, are assigned continuously increasing names, starting from $ls.1$.

\begin{Rule} [Batch-mode naming scheme $M_b$]\label{rename:old} \rm
If a node $\tau$ belongs to the left spine named $ls$, and $\tau$ is the $i$-th node in $ls$, the name of $\tau$ is $ls.i$, i.e. $M_b(\tau)=ls.i$ \cite{CZ12}.
\end{Rule}

\begin{Def} [$Cover$ and $Mini$ \cite{CZ12,CZ11(1)}] \rm \label{Cover-Mini}
For a NSA $S = (\Sigma, Q,$ $Q_0,\delta, \lambda)$ with $|Q|=n$ and $k$ Streett pairs $\lambda = \{\langle G_1,B_1\rangle,$ $\langle G_2,B_2\rangle,\ldots,\langle G_k,B_k\rangle\}$. Let  $\beta$  be a subset of $[k]$, and $G_{\beta} = \bigcup_{i\in\beta}G_i$, where $G_i$  is the first element of the $i$-th Streett pair $\langle G_i,B_i\rangle$. Then, $Cover$ maps $2^{[k]}$ to $2^{[k]}$ such that
$$Cover(\beta)=\{j\in [k] \mid G_j \subseteq G_{\beta}\}$$
$Mini$ also maps $2^{[k]}$ to $2^{[k]}$ such that $j\in Mini(\beta)$ if, and only if, $j\in [k]\backslash Cover(\beta)$ and
\begin{equation}
\forall j'\in [k]\backslash Cover(\beta),[j'\neq j \rightarrow (G_{j'}\cup G_{\beta}\not\subset G_{j}\cup G_{\beta})], \label{con1}
\end{equation}
\begin{equation}
\forall j'\in [k]\backslash Cover(\beta),[j'<j \rightarrow (G_{j'}\cup G_{\beta} \neq G_{j}\cup G_{\beta})]. \label{con2}
\end{equation}
\end{Def}

\begin{Exm} \rm
For a NSA with $n=3$, $k=4$, $Q=\{q_0,q_1,q_2\}$, and the first elements of the four Streett pairs are
$G_1=\{q_0,q_1\}$, $G_2=\{q_0\}$, $G_3=\{q_1,q_2\}$, and $G_4=\{q_2\}$.
Let $\beta = \{3\}$. We have $G_{\beta}=G_3=\{q_1,q_2\}$. Obviously, $G_3\subseteq G_{\beta}$ and $G_4\subseteq G_{\beta}$, which infers to $Cover(\beta)=\{3,4\}$.

Further, we have $[k]\backslash Cover(\beta)=\{1,2\}$. 
For $j=1$, $j'=2$, we have $G_{j'}\cup G_{\beta} = \{q_0,q_1,q_2\}$ and $G_{j}\cup G_{\beta} = \{q_0,q_1,q_2\}$, which satisfies Conditions (\ref{con1}) and (\ref{con2}). Thus, $1\in Mini(\beta)$.
For $j=2$, $j'=1$, we also have $G_{j'}\cup G_{\beta} = \{q_0,q_1,q_2\}$ and $G_{j}\cup G_{\beta} = \{q_0,q_1,q_2\}$. Obviously, Condition (\ref{con2}) is violated since $(j'=1)<(j=2)$. Thus, $2\notin Mini(\beta)$.
As a result, $Mini(\beta)=\{1\}$.
\end{Exm}

\begin{Def} [$\mu$-Safra tree for Streett determinization \cite{CZ12}] \rm
Fix a NSA $S = (\Sigma, Q, Q_0,$ $\delta, \lambda)$ with $|Q|=n$ and $k$ Streett pairs $\lambda = \{\langle G_1,B_1\rangle,\langle G_2,B_2\rangle,\ldots,\langle G_k,B_k\rangle\}$. A \emph{$\mu$-Safra tree for Streett determinization} of the NSA $S$ is a labeled ordered tree $\langle T_o, V, l, h, M_b, E, F,$ $stor \rangle$, where $T_o$ is an ordered tree, and
\begin{itemize}
  \item $V$ is the set of all nodes in $T_o$.
  \item $l$: $V\rightarrow 2^Q$ is a state label of nodes with subsets of $Q$. The label of every node is equal to the union of its sons. The labels of two siblings are disjoint.
  \item $h$: $V\rightarrow 2^{[k]}$ is an index label, which annotates every node with a set of indices from $[k]$. The root is annotated by $[k]$. The annotation of every node is contained in that of its parent and it misses at most one element from the annotation of the parent. Every node that is not a leaf has at least one son with strictly smaller annotation. In addition, each leaf $\tau_l$ satisfies $h(\tau_l)=\emptyset$ or $Mini([k]-h(\tau_l))=\emptyset$, where $Mini$ is defined in Definition \ref{Cover-Mini} for determining the index labels of nodes.
  \item $M_b$: $V\rightarrow [n].[\mu+1]$, where $\mu=\min(n,k)$, assigns each node a unique name by the batch-mode naming scheme.
  \item $E,F\subseteq V$ are two disjoint subsets of $V$. They are used to define the Rabin acceptance condition.
  \item $stor$ is an additional \emph{structural ordering} on nodes. For every non-root node $\tau$, let $j(\tau)=\max\{(h(\tau_p)\cup\{0\})-h(\tau)\}$ where $\tau_p$ is the parent of $\tau$. $stor$ means that for any two siblings $\tau$ and $\tau'$, $\tau'$ is placed to the right of $\tau$ if, and only if, $j(\tau)>j(\tau')$, or $j(\tau)=j(\tau')$ and $\tau$ is older than $\tau'$.
\end{itemize}
\end{Def}

The following lemma has been proved in \cite{CZ12}.

\begin{Lem} \rm  \label{TreeSize1}
For a $\mu$-Safra tree for Streett determinization of a NSA with $n$ states and $k$ Streett pairs, there are at most $n$ left spines, and each left spine has at most $\mu+1$ nodes, where $\mu=\min(n,k)$. Therefore, $[n].[\mu+1]$ node names  are sufficient \cite{CZ12}.
\end{Lem}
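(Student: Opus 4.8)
The plan is to establish the two bounds separately: (i) at most $n$ left spines, and (ii) at most $\mu+1$ nodes on each left spine, where $\mu=\min(n,k)$. The first bound follows from the state labels. By definition of a $\mu$-Safra tree, the label $l$ of a node is the union of the labels of its children, the labels of siblings are disjoint, and every non-root node must have a strictly smaller index annotation than its parent, which forces every node to carry a nonempty state label (an empty-labelled node could be pruned, and the construction maintains the invariant that labels are nonempty). Recall that the head of a left spine is, by definition, a node that is not the leftmost child of its parent; distinct left spines therefore have distinct heads, and the heads of left spines hanging off a common ancestor have pairwise disjoint nonempty labels. I would argue that if one sums up, over the whole tree, the number of heads, the disjointness of sibling labels together with the fact that the root's label is contained in $Q$ (with $|Q|=n$) yields that the total number of heads — hence the total number of left spines — is at most $n$. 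Concretely, assign to each left spine its head, and observe that the map sending a head $\tau$ to any fixed state in $l(\tau)$ can be made injective by using the tree order and disjointness, giving $n$ as the cap.

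For the second bound, consider a single left spine $\tau_1,\tau_2,\ldots,\tau_m$, where each $\tau_{i}$ (for $i\ge 2$) is the leftmost child of $\tau_{i-1}$. Here I would use the index label $h$: along the spine, $h(\tau_1)\supseteq h(\tau_2)\supseteq\cdots\supseteq h(\tau_m)$, since $h$ of a node is contained in $h$ of its parent. Moreover each step down loses at most one index, and — crucially — I would show each step down the spine loses \emph{at least} one index. This is where the "leftmost child" structure matters: the definition requires that every non-leaf node has at least one child with strictly smaller annotation; combined with the $stor$ structural ordering (which places children with larger $j(\tau)$, i.e. larger dropped index, to the \emph{right}) and the batch-mode conventions, the leftmost child is forced to be the one that actually drops an index, so $h(\tau_{i+1})\subsetneq h(\tau_i)$ strictly. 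Hence $|h(\tau_1)|>|h(\tau_2)|>\cdots>|h(\tau_m)|$. Since $h(\tau_1)\subseteq[k]$ we get $|h(\tau_1)|\le k$, and the chain has length at most $k+1$; since the labels $l(\tau_i)$ are also strictly decreasing along the spine (each $\tau_{i+1}$ being a proper-or-equal subset, and the leaf condition plus nonemptiness forcing strict decrease eventually) and $l(\tau_1)\subseteq Q$ with $|Q|=n$, the chain also has length at most $n+1$. Taking the minimum gives at most $\mu+1=\min(n,k)+1$ nodes per left spine.

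Combining, every node lies on exactly one of at most $n$ left spines, and holds one of at most $\mu+1$ positions on that spine, so the pair (spine name, position) drawn from $[n]\times[\mu+1]$ suffices to name every node uniquely under $M_b$; this is exactly the claim. The main obstacle I anticipate is the second bound — specifically, rigorously justifying that descending to the \emph{leftmost} child strictly decreases the index annotation $h$. The "at most one element missed" and "at least one child with smaller annotation" clauses only directly guarantee strict decrease along \emph{some} path, so the argument must invoke the $stor$ ordering and the precise left-spine definition to pin the index-dropping child as the leftmost one; getting this interplay exactly right (and handling the boundary case where $h$ has already shrunk to a set on which $Mini$ vanishes, terminating the spine) is the delicate part. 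Since the lemma is quoted as already proved in \cite{CZ12}, I would also simply cite that source for the detailed verification and present the above as the structural sketch.
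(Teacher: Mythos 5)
The paper does not actually prove this lemma itself (it is imported from \cite{CZ12}), so your sketch has to stand on its own, and it has two genuine problems, both in the bound on the length of a left spine. First, you have the $stor$ ordering backwards: by the paper's definition, $\tau'$ is placed to the \emph{right} of $\tau$ iff $j(\tau)>j(\tau')$, so siblings are arranged with \emph{larger} $j$ to the \emph{left} (see also the worked example for Step~2, where siblings are rearranged ``from the largest to the smallest'' value of $j$). Under your stated reading (larger $j$ to the right) the leftmost child would be precisely a child with minimal $j$, possibly $j=0$, i.e.\ one that drops \emph{no} index, and your strict-decrease claim for $h$ along the spine would collapse. With the correct orientation the argument does go through: every non-leaf has some child with $j>0$, the leftmost child has the maximal $j$ among its siblings, hence $j>0$, hence $h$ shrinks by exactly one index at every spine step; this yields the $k+1$ bound.

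Second, your justification of the $n+1$ bound --- ``the labels $l(\tau_i)$ are also strictly decreasing along the spine'' --- is false. The label of a node equals the union of its children's labels, and Step~6 creates children with $l(\tau')=l(\tau)$; the initial tree $T_I$ is a single left spine all of whose nodes carry the identical label $Q_0$. So state labels can be constant along an entire spine and give no bound at all. The $n+1$ bound instead comes from the $Cover$/$Mini$ machinery you set aside: every index dropped at a spine step belongs to $Mini([k]-h(\tau_p))$, hence lies outside $Cover([k]-h(\tau_p))$, which means $G_j\not\subseteq G_{[k]-h(\tau_p)}$; consequently $G_{[k]-h(\tau_1)}\subsetneq G_{[k]-h(\tau_2)}\subsetneq\cdots$ is a strictly increasing chain of subsets of $Q$ and so has at most $n$ steps. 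Combining the two bounds gives $\mu+1=\min(n,k)+1$ nodes per spine. Your count of at most $n$ left spines is essentially right, though it is cleaner to note that left spines are in bijection with leaves and that distinct leaves have pairwise disjoint nonempty state labels.
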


Accordingly, Lemma \ref{TreeSize2} is easily obtained.

\begin{Lem} \rm \label{TreeSize2}
The number of nodes in a $\mu$-Safra tree for Streett determinization is at most $n(\mu +1)$ \cite{CZ12}.
\end{Lem}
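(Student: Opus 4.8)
The plan is to derive the bound directly from Lemma~\ref{TreeSize1}, which has already been established. The idea is that every node of the tree lies on exactly one left spine, so the total node count is the sum, over all left spines, of the number of nodes on that spine. I would first invoke Lemma~\ref{TreeSize1} to obtain two facts: there are at most $n$ left spines, and each left spine contains at most $\mu+1$ nodes, where $\mu=\min(n,k)$. Then, since the left spines partition $V$ (each node belongs to exactly one left spine, as recalled in the preamble to Rule~\ref{rename:old}), we have $|V|=\sum_{ls}(\text{number of nodes on }ls)\le (\text{number of left spines})\cdot(\max\text{ nodes per spine})\le n\cdot(\mu+1)$, which is the claimed bound.

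The only point requiring a word of care is that the partition argument is valid: distinct left spines are genuinely disjoint and their union is all of $V$. This follows from the definition of left spine as a maximal path of the prescribed form — every node $\tau$ has a unique maximal ancestor chain reaching $\tau$ via left-most-child steps whose top is not itself a left-most child, so $\tau$ is assigned to exactly one left spine. I would state this as a one-line observation rather than belabor it, since it is implicit in the naming scheme $M_b$ already introduced. No step here is a real obstacle; the lemma is a routine counting corollary of Lemma~\ref{TreeSize1}, and the proof is essentially the single displayed inequality above together with the partition remark. I would therefore keep the write-up to two or three sentences, emphasizing only the multiplication of the two bounds from Lemma~\ref{TreeSize1}.
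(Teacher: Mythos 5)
Your proposal is correct and follows exactly the route the paper intends: the paper states that the lemma "is easily obtained" from Lemma~1, i.e.\ by multiplying the bound of at most $n$ left spines by the bound of at most $\mu+1$ nodes per spine, using that the left spines partition the node set. Your added remark that each node lies on exactly one left spine is the right (and only) point needing justification, and it matches the paper's setup.
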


Fig.~\ref{fig:ht1} illustrates a $\mu$-Safra tree for Streett determinization of a NSA with $5$ states, namely, $a$, $b$, $c$, $d$ and $e$. This $\mu$-Safra tree  contains $12$ nodes. The state sets shown in nodes are state labels. The batch-mode names and index labels of nodes are given in red and blue, respectively. There are four left spines, i.e. $\{1.1,1.2,1.3,1.4\}$, $\{2.1,2.2,2.3\}$, $\{3.1,3.2\}$ and $\{4.1,4.2,4.3\}$.

\begin{figure}[htp]
\centerline{\includegraphics[height=4.5cm]{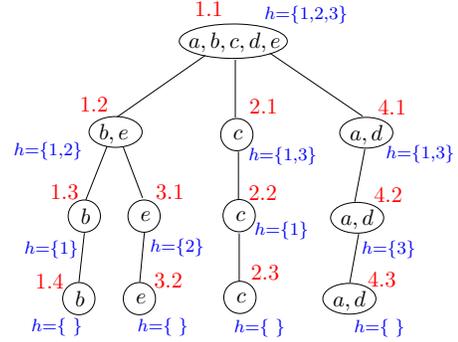}}\caption{A $\mu$-Safra tree for Streett determinization} \label{fig:ht1}
\end{figure}

Along a sequence of $\mu$-Safra tree for Streett determinization transformations, there may exist some node whose name is changed. For instance, when a node moves into another left spine, the node should be renamed.  The renaming scheme is stated by Rule~\ref{rename:old} \cite{CZ12}.

\begin{Rule} [Batch-mode renaming scheme]\label{rename:old} \rm
  When a left spine is created, nodes in the left spine are assigned names from an unused name bucket. When a left spine is removed, the name bucket of the left spine is recycled.
   When a left spine $ls$ is grafted into another left spine $ls'$, the name bucket of $ls$ is recycled and nodes on $ls$ are renamed as if they were on $ls'$, originally.
\end{Rule}

In the transformations, the index labels $h$ of the new created nodes also need to be defined. The index label $h$ of a node $\tau$ is a subset of the indices set of all Streett pairs. We will check whether all states in $l(\tau)$ visits the first elements $G$s of these Streett pairs one by one. But there may exist a situation that some $G$ of a Streett pair $\langle G,B\rangle$ is contained by another $G'$. If $G'$ has been checked, it is redundant to further check $G$. In order to reduce unnecessary inspections, the functions $Mini$, which decreases the combination of index labels $h$, will be utilized in the determinization construction. In \cite{CZ12}, it has been proved that using $Mini$ to select the index labels of the children is sound and complete.

\subsection{Construction of $\mu$-Safra Trees for Streett Determinization}
Fix a NSA $S=(\Sigma,Q,Q_0,\delta,\langle G,B\rangle_{[k]})$. The initial $\mu$-Safra tree for Streett determinization of $S$ is a single-branch (only a left spine) labelled tree $T_I$. Every node is named by the batch-mode naming scheme. For each node $\tau$, $l(\tau)=Q_0$ and $h(\tau)=h(\tau_p)-\max\{Mini([k]-h(\tau_p))\}$. Specially, for the root $\tau_r$, $h(\tau_r)=[k]$, and the leaf $\tau_l$ satisfies $h(\tau_l)=\emptyset$ or $Mini([k]-h(\tau_l))=\emptyset$. Set $E=\emptyset$ and $F=\emptyset$. Given a $\mu$-Safra tree $T_{\mu}$ for Streett determinization of $S$ and $\sigma\in\Sigma$, we construct a new $\mu$-Safra tree $\hat{T_{\mu}}$ for Streett determinization, called the $\sigma$-$successor$ of $T_{\mu}$, in six steps as follows.

\begin{enumerate}
\item \textbf{Update}: Set $E$ and $F$ to empty sets and replace the state label of every node $\tau$ in $T_{\mu}$ by $\bigcup_{q \in l(\tau)} \delta(q,\sigma)$. Call the resultant labelled tree $T_{\mu_1}$.

\item \textbf{Create siblings}: Apply the following transformations to non-leaf nodes of $T_{\mu_1}$. Let $\tau$ be a node with $m$ children $\tau_1,\ldots,\tau_m$. Sequentially consider the following cases for each $i\in[1..m]$ from $1$ to $m$.

    \begin{enumerate}[a)]
      \item If $l(\tau_i)\cap G_{j(\tau_i)}\neq\emptyset$, add a child $\tau'$ to $\tau$ with $l(\tau')=l(\tau_i)\cap G_{j(\tau_i)}$ and $h(\tau')=h(\tau)-\max\{[0..j(\tau_i))\cap(\{0\}\cup Mini([k]-h(\tau)))\}$, and remove the states in $l(\tau_i)\cap G_{j(\tau_i)}$ from $\tau_i$ as well as all its descendants.

      \item If $l(\tau_i)\cap G_{j(\tau_i)}=\emptyset$ and $l(\tau_i)\cap B_{j(\tau_i)}\neq\emptyset$, add a child $\tau'$ to $\tau$ with $l(\tau')=l(\tau_i)\cap B_{j(\tau_i)}$ and $h(\tau')=h(\tau_i)$, and remove the states in $l(\tau_i)\cap B_{j(\tau_i)}$ from $\tau_i$ as well as all its descendants.
    \end{enumerate}
    Call the resultant labelled tree $T_{\mu_2}$.

\item \textbf{Horizontal merge}: For any two siblings $\tau$ and $\tau'$ in $T_{\mu_2}$ and any state $q\in l(\tau_i)\cap l(\tau_{i'})$, if $j(\tau) <j(\tau')$, or $j(\tau) = j(\tau')$ and $\tau$ is older than $\tau'$, then remove $q$ from $\tau'$ and all its descendants. Remove nodes with empty state label and add their names, if defined, to $E$. Call the resultant labelled tree $T_{\mu_3}$.

\item \textbf{Vertical merge}: For each non-leaf $\tau$ in $T_{\mu_3}$, if all children are annotated by $h(\tau)$, then remove all the children and their descendants. Add the name of $\tau$ to $F$. Call the resultant labelled tree $T_{\mu_4}$.

\item \textbf{Rename}: Rename nodes whose names are defined in $T_{\mu_4}$ according to Rule \ref{rename:old} and add nodes that are renamed to $E$, which results in $T_{\mu_5}$.

\item \textbf{Create children}: Repeat the following procedure until no new nodes can be added: For each leaf $\tau$ in $T_{\mu_5}$ such that $h(\tau)\neq\emptyset$ and $Mini([k]-h(\tau))\neq\emptyset$, add to $\tau$ a new child $\tau'$. Set $l(\tau')=l(\tau)$, $h(\tau')=h(\tau)-\max\{Mini([k]-h(\tau))\}$. Then name nodes whose names are undefined according to the batch-mode naming scheme. The resultant labelled tree is denoted as $\hat{T_{\mu}}$.
\end{enumerate}

$\hat{T_{\mu}}$ is a $\mu$-Safra tree for Streett determinization.

Thus, given a NSA $S=(\Sigma,Q,Q_0,\delta,\langle G,B \rangle_{[k]})$, by applying the above six-step procedure recursively until no new $\mu$-Safra trees can be created, an associated DRA $DR=(\Sigma,Q_{DR},T_{\mu_I},$\\$\delta_{DR},\lambda_{DR})$ can be constructed. Here, $Q_{DR}$ is the set of $\mu$-Safra trees for Streett determinization of $S$, $T_{\mu_I}$ is the initial $\mu$-Safra tree for Streett determinization, $\delta_{DR}$ is the $\mu$-Safra-tree-Streett transition relation (i.e. $T_{\mu}\xrightarrow{\sigma} \hat{T_{\mu}}$ whenever $\hat{T_{\mu}}$ is the $\sigma$-successor of $T_{\mu}$), and $\lambda_{DR}=\{(A_{\tau_1},R_{\tau_1}),\ldots,(A_{\tau_k},R_{\tau_k})\}$ (where $k\geq 1$) is the Rabin acceptance condition. {For each $i$, the node $\tau_i$ is given by its name, $A_{\tau_i}$ is the set of $\mu$-Safra trees for Streett determinization  (node $\tau_i$ belongs to $F$ ), and $R_{\tau_i}$ the set of $\mu$-Safra trees for Streett determinization (node $\tau_i$ belongs to $E$).}

Given an input $\omega$-word $\alpha : \omega\rightarrow\Sigma$, we call the sequence $\Pi = T_{\mu_0}T_{\mu_1}T_{\mu_2}T_{\mu_3}\ldots$ of $\mu$-Safra trees for Streett determinization such that $T_{\mu_0} = T_{\mu_I}$, and for all $i\in\omega$, $T_{\mu_{i+1}}$ is the $\alpha(i)$-successor of $T_{\mu_i}$, the \emph{$\mu$-Safra Streett trace} of the NSA $S$ over $\alpha$. We view the $\mu$-Safra Streett trace of $S$ over $\alpha$ as the run of the DRA $DR$ over $\alpha$. Then we say that $\alpha$ is \emph{accepted by the DRA} if there exists $i\in\{1,\ldots,k\}$ such that $\mathsf{Inf}(\Pi)\cap A_{\tau_i}\neq\emptyset$ and $\mathsf{Inf}(\Pi)\cap R_{\tau_i}=\emptyset$.

\begin{Thm} [Cai and Zhang \cite{CZ12,CZ13}]
Given a NSA $S$ with $n$ states and $k$ Streett pairs, a DRA with $n^{7n}(n!)^{n+1}$ states, $O(n^2)$ Rabin pairs for $k=\omega(n)$, and $n^{5n}k^{n(k+2)}n!$ states, $O(nk)$ Rabin pairs for $k=O(n)$ can be constructed that recognizes the language $L(S)$.
\end{Thm}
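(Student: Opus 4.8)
The plan is to verify the construction in two stages: first that the DRA $DR$ built by the six-step procedure recognizes exactly $L(S)$, and then that its state set and its Rabin condition obey the stated bounds. The heart of correctness is a subset-construction style invariant maintained along the $\mu$-Safra Streett trace $\Pi=T_{\mu_0}T_{\mu_1}\cdots$: for every node $\tau$ present in $T_{\mu_i}$, a state $q$ belongs to $l(\tau)$ if and only if $S$ has a finite run on $\alpha(0)\cdots\alpha(i-1)$ ending in $q$ that, since the relevant branching ancestor of $\tau$ was last created, has visited $G_{j(\tau')}$ for every branching ancestor $\tau'$ of $\tau$ — equivalently, by the $Cover$/$Mini$ bookkeeping of Definition~\ref{Cover-Mini}, has discharged every Streett pair whose index is omitted from $h(\tau)$. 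This invariant is proved by induction on $i$, tracking the six steps \textbf{Update}, \textbf{Create siblings}, \textbf{Horizontal merge}, \textbf{Vertical merge}, \textbf{Rename}, \textbf{Create children}; the only non-routine point is that selecting the children's index sets through $Mini$ rather than through arbitrary single-index deletions is both sound and complete with respect to ``having discharged all omitted pairs'', which is the property established in \cite{CZ12}.

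Granting the invariant, the two language inclusions follow the classical Safra argument. If $\alpha$ is accepted by $DR$, then some node name $\tau_i$ lies in $F$ infinitely often and in $E$ only finitely often along $\Pi$; hence beyond the last $E$-event the node named $\tau_i$ is never removed and never renamed while being vertically merged infinitely often, and König's lemma applied to the nested $l$-labels threaded through these merge points yields an infinite run of $S$ whose infinity set hits $G_j$ for each $j\notin h(\tau_i)$ and avoids $B_j$ for each $j\in h(\tau_i)$, hence satisfies the Streett condition, so $\alpha\in L(S)$. Conversely, an accepting run of $S$ eventually settles into a unique such node, whose name then gives an accepting Rabin pair for $DR$, so $L(S)\subseteq L(DR)$.

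For the state count, recall from Lemma~\ref{TreeSize2} that each $\mu$-Safra tree has at most $n(\mu+1)$ nodes, arranged in at most $n$ left spines of at most $\mu+1$ nodes each by Lemma~\ref{TreeSize1}, where $\mu=\min(n,k)$. A $\mu$-Safra tree is determined by: (i) the ordered tree shape together with node names from $[n].[\mu+1]$, where the left-spine discipline, the position-indexed naming, and the at most $n!$-fold freedom in assigning name buckets under the batch-mode renaming scheme (Rule~\ref{rename:old}) contribute the $n^{7n}$ (respectively $n^{5n}$) factor and one factor $n!$; (ii) the state label $l$, which, since labels along any root-to-leaf branch are nested and sibling labels are disjoint, amounts to choosing for each of the $n$ states the deepest node containing it, i.e. at most $(n(\mu+1)+1)^{n}=n^{O(n)}$ possibilities; (iii) the index label $h$, which along each left spine is a strictly descending chain in $2^{[k]}$ whose descent at each of the at most $\mu$ steps is constrained by $Mini$ to $O(\mu)$ meaningful options, giving a bound of $(n!)$-type per spine when $\mu=n$ and of $k^{k}$-type per spine when $\mu=k$, hence $(n!)^{n}$ respectively $k^{nk}$ over the at most $n$ spines; (iv) the disjoint sets $E,F\subseteq[n].[\mu+1]$, contributing only $2^{O(n(\mu+1))}$, which is absorbed by the dominant factors; the structural ordering $stor$ is a function of $h$ and the age order and so adds nothing. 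Multiplying and simplifying separately in the regimes $\mu=n$ (for $k=\omega(n)$) and $\mu=k$ (for $k=O(n)$) yields $n^{7n}(n!)^{n+1}$ states for $k=\omega(n)$ and $n^{5n}k^{n(k+2)}n!$ states for $k=O(n)$. The Rabin condition of $DR$ has one pair $(A_{\tau},R_{\tau})$ per possible node name, i.e. $|[n].[\mu+1]|=n(\mu+1)$ pairs, which is $O(n^2)$ when $k=\omega(n)$ and $O(nk)$ when $k=O(n)$.

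The genuinely delicate ingredient, and the part on which I would spend most of the effort, is the soundness-and-completeness of the $Mini$-based index selection — that pruning the children's index sets neither loses any run witnessing the Streett condition nor manufactures a spurious witness — together with the precise form of the invariant it supports and its preservation across \textbf{Horizontal merge} and \textbf{Vertical merge}. Once that is in place, the König's-lemma extraction of an infinite run from a node that is eventually never removed, never renamed, and vertically merged infinitely often, as well as the converse direction, are routine Safra-style bookkeeping, and the state count is elementary combinatorics over the tree components bounded by Lemma~\ref{TreeSize2}.
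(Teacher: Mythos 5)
A preliminary remark on the comparison you asked for: the paper does not prove this theorem at all. It is stated with the attribution ``Cai and Zhang \cite{CZ12,CZ13}'' inside a section that merely revisits their construction, so there is no in-paper proof to match your proposal against; the closest analogues are the later Theorems~\ref{thm:RT} and~\ref{thm:complexity} for the modified H-Safra construction, which likewise defer the run-tracking invariant and the witness-set extraction to \cite{Safra92} and \cite{CZ12}. Measured against that, your correctness outline (the invariant on $l(\tau)$, the soundness and completeness of the $Mini$-based pruning of index labels, and the K\"onig-style extraction of a run with witness set $[k]\setminus h(\tau)$ from a node that is eventually never removed or renamed and is vertically merged infinitely often) is the standard argument and is consistent with what the paper itself sketches.

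The state count, however, has a concrete flaw that prevents your decomposition from yielding the stated formulas. Comparing this theorem with the Corollary that follows it (the DRTA obtained by deleting $E$ and $F$), the entire difference is a factor $\mu^{2n}$ with $\mu=\min(n,k)$: the ratio is $n^{7n}(n!)^{n+1}/\bigl(n^{5n}(n!)^{n+1}\bigr)=n^{2n}$ for $k=\omega(n)$ and $n^{5n}k^{n(k+2)}n!/\bigl(n^{5n}k^{nk}n!\bigr)=k^{2n}$ for $k=O(n)$. Hence the $(E,F)$ component must account for exactly $\mu^{2n}$, and the $n^{7n}$-versus-$n^{5n}$ gap cannot be charged to the tree shape and naming as your item (i) does --- the shape/name count does not change between the DRA and the DRTA, and for $k=O(n)$ the extra factor sits in the $k$-exponent, which shapes and names cannot produce. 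Moreover, your item (iv) bounds $E,F$ by $2^{O(n(\mu+1))}$, i.e.\ $3^{n(\mu+1)}$, and declares it absorbed; for $\mu=n$ this is $2^{\Theta(n^2)}$, which strictly exceeds the available slack $n^{2n}=2^{\Theta(n\log n)}$, so it is absorbed in the exponent of a coarse $2^{O(n^2\log n)}$ bound but not in the closed-form expression the theorem asserts. To recover the stated formula one needs the structural restriction implicit in the construction that $E$ and $F$ each pick out roughly one name per left spine (there are at most $n$ spines of at most $\mu+1$ nodes by Lemma~\ref{TreeSize1}), giving about $(\mu+2)^{2n}$ rather than $3^{n(\mu+1)}$. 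Your Rabin-pair count ($n(\mu+1)$ pairs, one per name, hence $O(n^2)$ and $O(nk)$) is fine.
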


By deleting the two sets $E$ and $F$ of each $\mu$-Safra tree in the Streett determinization and recording  the accepting and rejecting nodes throughout each transition, a DRTA can be constructed.

\begin{Cor}
Given a NSA $S$ with $n$ states and $k$ Streett pairs, a DRTA with $n^{5n}(n!)^{n+1}$ states, $O(n^2)$ Rabin pairs for $k=\omega(n)$, and $n^{5n}k^{nk}n!$ states, $O(nk)$ Rabin pairs for $k=O(n)$ can be constructed that recognizes the language $L(S)$.
\end{Cor}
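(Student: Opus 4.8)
The plan is to obtain the DRTA directly from the DRA $DR$ of the preceding theorem by relocating the Rabin acceptance information from states to transitions, thereby removing the two auxiliary components $E$ and $F$ from each $\mu$-Safra tree state.

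The first observation is that the six-step $\sigma$-successor construction never \emph{reads} the sets $E$ and $F$ of its input tree: Step~1 resets both to $\emptyset$, and Steps~2--6 only \emph{append} node names to them (vertically merged nodes to $F$; removed and renamed nodes to $E$). Hence, calling a tuple $\langle T_o,V,l,h,M_b,stor\rangle$ a \emph{reduced $\mu$-Safra tree}, the construction induces a well-defined deterministic map sending a reduced tree $\bar T$ and a letter $\sigma$ to a reduced tree $\bar T'$ together with two name sets $E_{\bar T,\sigma},F_{\bar T,\sigma}\subseteq [n].[\mu+1]$ accumulated along the way. I would then define $DRT=(\Sigma,Q_{DRT},\bar T_{\mu_I},\delta_{DRT},\lambda_{DRT})$: $Q_{DRT}$ is the set of reduced $\mu$-Safra trees reachable from the initial one; $\delta_{DRT}$ contains the transition $(\bar T,\sigma,\bar T')$, annotated with the pair $(E_{\bar T,\sigma},F_{\bar T,\sigma})$, for every $\bar T$ and $\sigma$; and for each node name $\nu$ there is one Rabin pair $(A_\nu,R_\nu)$, where $A_\nu$ collects the transitions whose annotation contains $\nu$ in its $F$-component and $R_\nu$ those whose annotation contains $\nu$ in its $E$-component. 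Determinism is immediate since the $\sigma$-successor is unique.

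For correctness, note (an easy induction) that the run of $DRT$ on a word $\alpha$ is precisely the run $\Pi=T_{\mu_0}T_{\mu_1}\cdots$ of $DR$ on $\alpha$ with the $E,F$ data stripped off the states and reattached to the incoming transitions, because the construction producing the state following $T_{\mu_j}$ yields exactly the sets $E,F$ carried by $T_{\mu_{j+1}}$. Consequently the $j$-th transition of the $DRT$-run lies in $A_\nu$ (resp.\ $R_\nu$) iff $\nu\in F(T_{\mu_{j+1}})$ (resp.\ $\nu\in E(T_{\mu_{j+1}})$). Since there are finitely many transitions and finitely many trees, a pigeonhole argument shows that a transition in $A_\nu$ occurs infinitely often in the $DRT$-run iff infinitely many positions $j$ satisfy $\nu\in F(T_{\mu_j})$ iff some tree in $A_\nu^{DR}$ lies in $\mathsf{Inf}(\Pi)$, and likewise the set of transitions occurring infinitely often avoids $R_\nu$ iff $\mathsf{Inf}(\Pi)\cap R_\nu^{DR}=\emptyset$ (the initial tree, with $E=F=\emptyset$, does not affect these infinitary conditions). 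Therefore $\alpha$ is accepted by $DRT$ iff it is accepted by $DR$, so $L(DRT)=L(DR)=L(S)$.

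For the size bounds I would reuse the counting of \cite{CZ12}. The number of Rabin pairs is the number of node names $n(\mu+1)$, i.e.\ $O(n^2)$ when $k=\omega(n)$ (so $\mu=n$) and $O(nk)$ when $k=O(n)$ (so $\mu=k$), exactly as for the DRA. For the states, the elements of $Q_{DRT}$ are $\mu$-Safra trees without the $E,F$ components, so its cardinality is the bound of \cite{CZ12} for the number of $\mu$-Safra trees with the multiplicative contribution of $E$ and $F$ deleted; that contribution is $\mu^{2n}$, namely $n^{2n}$ for $k=\omega(n)$ and $k^{2n}$ for $k=O(n)$, so the state count drops from $n^{7n}(n!)^{n+1}$ to $n^{5n}(n!)^{n+1}$ and from $n^{5n}k^{n(k+2)}n!=n^{5n}k^{nk}k^{2n}n!$ to $n^{5n}k^{nk}n!$, as claimed. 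The only genuinely delicate point is verifying that the counting of \cite{CZ12} does factor the choice of $(E,F)$ as an independent $\mu^{2n}$ term sitting on top of the count of the remaining tree data; everything else --- that the successor construction ignores $E$ and $F$, and that relocating them onto transitions preserves the $\mathsf{Inf}$-conditions --- is routine bookkeeping.
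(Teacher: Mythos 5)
Your proposal is correct and takes essentially the same approach as the paper, which justifies this corollary only by the one-sentence remark that deleting the sets $E$ and $F$ from the $\mu$-Safra trees and recording the accepting and rejecting nodes on each transition yields the DRTA. Your elaboration---that the successor construction never reads $E,F$, that relocating them to transitions preserves the $\mathsf{Inf}$-conditions by a pigeonhole argument, and that the state count drops by exactly the $\mu^{2n}$ factor contributed by $(E,F)$ in the counting of \cite{CZ12}---is consistent with the stated bounds and fills in what the paper leaves implicit.
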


\section{H-Safra Trees and LIR-H-Safra Trees for Streett Determinization}\label{sec:datastructure}
This section presents two new data structures, called H-Safra trees and LIR-H-Safra trees for Streett determinization.
\subsection{H-Safra Trees for Streett Determinization}
{As for B\"{u}chi determinization, Schewe proposes a tight construction via \emph{history trees} which results in an equivalent DRTA \cite{Sven09}. In Schewe's construction, instead of explicit names,  nodes are implicitly named. This leads to a reduction of state complexity. 
With this motivation, we put forward a new data structure namely \emph{H-Safra trees} for Streett determinization. Compared with $\mu$-Safra trees for Streett determinization, the only difference is the naming scheme of nodes.

For a \emph{structural ordered tree with state and index labels} (i.e. a $\mu$-Safra tree for Streett determinization without names, $E$ and $F$), denoted by $T_{si}$ (Fig.~\ref{fig:ht2} is an example), we give a new naming scheme depending only on the index label $h$ of nodes, which is expressed by Rule~\ref{name:new}.

\begin{figure}[htp]
\centerline{\includegraphics[height=4.5cm]{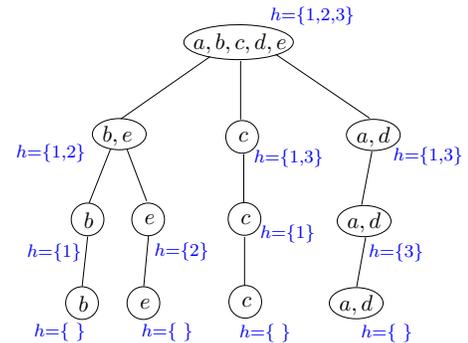}}\caption{A structural ordered tree with state and index labels} \label{fig:ht2}
\end{figure}

\begin{Rule} [Naming scheme $M_n$]\label{name:new} \rm \
\begin{itemize}
  \item For the root $\tau_r$, $M_n(\tau_r)=\epsilon$;
  \item for each node $\tau$ in the second level, $M_n(\tau)=j(\tau)^{i+1}$ where $i=|\{\tau'|\tau'$ is the left sibling of $\tau$, and $j(\tau')=j(\tau)\}|$;
  \item for any other node $\tau$, $M_n(\tau)=M_n(\tau_p).j(\tau)^{i+1}$.
\end{itemize}
\end{Rule}

Utilizing the new naming scheme, we can get a \emph{H-Safra tree for Streett determinization}.
\begin{Def} [H-Safra trees for Streett determinization] \rm
A \emph{H-Safra tree for Streett determinization} of a given NSA $S=(\Sigma,Q,Q_0,\delta,\lambda)$ with $n$ states and $k$ Streett pairs is a pair $\langle T_{si},M_n \rangle$ where $T_{si}$ is a structural ordered tree with state and index labels of $S$, and $M_n$ is the new naming scheme.
\end{Def}

Fig.~\ref{fig:ht3} is a H-Safra tree for Streett determinization obtained from Fig.~\ref{fig:ht2} by using the new naming scheme. Here, the names of nodes are given in red. For the node $\tau$ with $l(\tau)=\{a,d\}$ and $h(\tau)=\{1,3\}$ ($j(\tau)=2$), it belongs to the second level nodes, and there exists a left sibling $\tau'$ such that $j(\tau')=j(\tau)=2$. Thus the name of $\tau$ is $2^2$.

\begin{figure}[htp]
\centerline{\includegraphics[height=5cm]{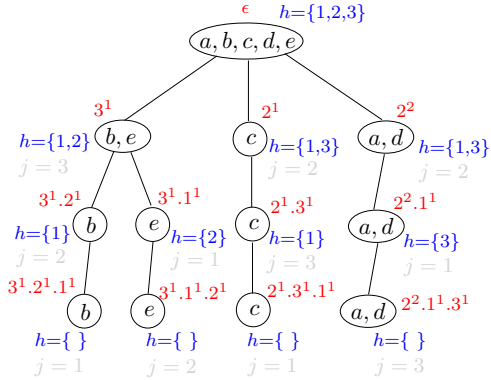}}\caption{A H-Safra tree} \label{fig:ht3}
\end{figure}

Obviously, each node in a structural ordered tree with state and index labels can be uniquely named.

The new naming scheme is the core of our determinization construction. Given a NSA, H-Safra trees for Streett determinization will be taken as the states of the final DRTA. By the naming scheme, once the index label $h$ of each node is fixed, the name is also determined, which makes the state complexity decrease.

\begin{Lem} \rm \label{complex:HST}
The number of H-Safra trees for Streett determinization of a given NSA is equal to the number of structural ordered trees with state and index labels, i.e. $\mu$-Safra trees for Streett determinization without names, $E$, and $F$, occurring in the determinization construction.
\end{Lem}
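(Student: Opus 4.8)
The plan is to exhibit an explicit bijection between the structural ordered trees with state and index labels occurring in the construction and the H-Safra trees for Streett determinization, namely the map $\Phi$ sending $T_{si}$ to the pair $\langle T_{si},M_n\rangle$, where $M_n$ is the naming obtained by applying Rule~\ref{name:new} to $T_{si}$. Surjectivity of $\Phi$ is immediate, since by definition every H-Safra tree is of exactly this form; and $\Phi$ is the identity on the first coordinate, so injectivity will be trivial once $\Phi$ is known to be well defined. Hence the whole content reduces to one claim: $M_n$ is completely determined by $T_{si}$, i.e.\ the name attached to a node adds nothing to the data already recorded by the tree shape, the state labels, the index labels and the structural ordering. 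Alongside it I would carry an auxiliary claim --- that $M_n$ is injective on the node set of a fixed $T_{si}$ --- since this is what makes names usable as node identifiers, as they must be for the acceptance condition built on node names to be well posed.

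For well-definedness I would induct on the depth of a node. The root gets the fixed name $\epsilon$. For a non-root node $\tau$ with parent $\tau_p$, Rule~\ref{name:new} sets $M_n(\tau)=M_n(\tau_p).j(\tau)^{i+1}$, and every ingredient is a function of $T_{si}$: $M_n(\tau_p)$ by the induction hypothesis, $j(\tau)=\max\{(h(\tau_p)\cup\{0\})-h(\tau)\}$ by the index labels, and $i$ --- the number of left siblings $\tau'$ of $\tau$ with $j(\tau')=j(\tau)$ --- by the sibling order, which is pinned down by the structural ordering $stor$. I would also note that the middle clause of Rule~\ref{name:new}, for second-level nodes, is just the general clause specialized to $M_n(\tau_p)=\epsilon$, so the rule is internally consistent. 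This establishes that $\Phi$ is a well-defined bijection; and since names (as well as $E$ and $F$) play no part in deciding which $\sigma$-successor transformations act on the state labels $l$ and the index labels $h$, the reachable structural ordered trees with state and index labels are the same regardless of the naming discipline, so $\Phi$ restricts to a bijection between the objects actually occurring in the two constructions, which is exactly the asserted equality.

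For the auxiliary injectivity claim, two nodes at different depths receive names with different numbers of dot-separated blocks and are therefore distinct. Two distinct nodes at the same depth have a deepest common ancestor and, one level below it, branch into two distinct siblings $\sigma,\sigma''$; the trailing blocks of $M_n(\sigma)$ and $M_n(\sigma'')$ are $j(\sigma)^{i+1}$ and $j(\sigma'')^{i''+1}$, and these differ because either $j(\sigma)\neq j(\sigma'')$, or $j(\sigma)=j(\sigma'')$ but then $\sigma$ and $\sigma''$ occupy distinct positions among the siblings carrying that common $j$-value, so $i\neq i''$. Since the full names of the two original nodes agree on the common prefix and then disagree at the next block, they are distinct. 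The only step needing genuine care is this last sibling case --- which is precisely where the structural ordering $stor$ does the work --- and that is the point I expect a careful reader to scrutinize; the rest is bookkeeping.
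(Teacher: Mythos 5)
Your proposal is correct and follows essentially the same route as the paper: the paper's (two-sentence) proof likewise observes that $M_n(\tau)$ is determined by the index labels and the position of $\tau$ in the structural ordered tree, so the naming adds no information and the counts coincide. You simply spell out the bijection, the induction establishing well-definedness, and the uniqueness of names, all of which the paper leaves implicit.
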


\begin{proof}
By the naming scheme $M_n$, for each node $\tau$ occurring in a structural ordered tree with state and index labels,
a unique name $M_n(\tau)$ is assigned to $\tau$. $M_n(\tau)$ depends on the index label and the position of $\tau$ in the tree. Thus, the number of H-Safra trees for Streett determinization of a NSA is equal to the number of structural ordered trees with state and index labels.
\end{proof}

\subsection{LIR-H-Safra Trees for Streett Determinization}
In order to transform a NSA to a DPTA, we need a dynamic node identification scheme that captures the order in which the nodes are created when constructing the $\sigma$-successors. Consequently, the state complexity of the DPTA transform will increase. Similar to the constructions of Schewe from NBA to DPA \cite{Sven09} and from NPA to DPA \cite{ScheweV14}, the data structure we shall use is H-Safra trees for Streett determinization with \emph{later introduction record} (LIR), called \emph{LIR-H-Safra trees for Streett determinization}. A LIR is a sequence of nodes in the H-Safra tree for Streett determinization according to the order the nodes are generated.

\begin{Def} [LIR-H-Safra trees for Streett determinization] \em
Given a NSA $S=(\Sigma,$ $Q,Q_0,\delta,\lambda)$ with $n$ states and $k$ Streett pairs, a LIR-H-Safra tree for Streett determinization is a pair $\langle H,LIR\rangle$ where $H$ is a H-Safra tree for Streett determinization and $LIR$ stores the order in which the nodes of $H$ are created.
\end{Def}

\begin{figure}[htp]
\centerline{\includegraphics[height=5.7cm]{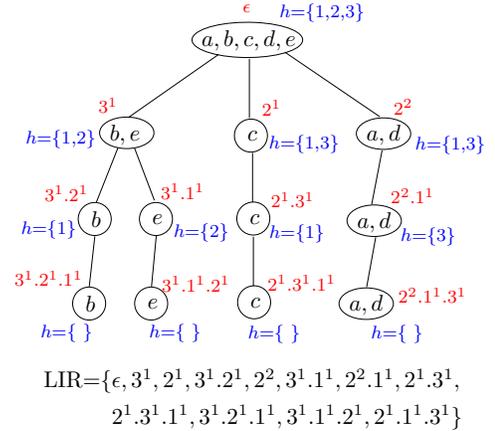}}\caption{A LIR-H-Safra tree} \label{fig:LHSTS}
\end{figure}

Fig.~\ref{fig:LHSTS} is a LIR-H-Safra tree for Streett determinization. The LIR contains all nodes of the tree such that each node appears after its left siblings. Every node in LIR is represented  by its name for simplicity.

As for each node $\tau$ of a given LIR-H-Safra tree for Streett determinization, we introduce an extra notation $p(\tau)$ to denote the position of $\tau$ in the LIR.

\section{Determinization via H-Safra Trees and LIR-H-Safra Trees for Streett}\label{sec:determinization}
This section presents a NSA-to-DRTA determinization transform via H-Safra trees and a NSA-to-DPTA determinization transform via LIR-H-Safra trees.

\subsection{Construction of H-Safra Trees for Streett Determinization}\label{sec:NSA2DRTA}
Fix a NSA $S=(\Sigma,Q,Q_0,\delta,\langle G,B\rangle_{[k]})$. The \emph{initial H-Safra tree for Streett determinization} of $S$ is a single-branch labelled tree $H_I$.  For each node $\tau$ of $H_I$, the state label $l(\tau)=Q_0$ and index label $h(\tau)=h(\tau_p)-\max\{Mini([k]-h(\tau_p))\}$. Specially, for the root $\tau_r$, $h(\tau_r)=[k]$, and the leaf $\tau_l$ satisfies $h(\tau_l)=\emptyset$ or $Mini([k]-h(\tau_l))=\emptyset$. Every node in $H_I$ is named by the new naming scheme.

Given a H-Safra tree $H$ for Streett determinization of $S$ and $\sigma\in\Sigma$, we construct a new H-Safra tree $\hat{H}$ for Streett determinization, called the $\sigma$-\emph{successor} of $H$, and the \emph{signatures} $\sig_{acc}$ and $\sig_{rej}$ of the transition, in six steps as follows.

We intuitively illustrate the six steps of construction by an example. Fig.~\ref{fig:nsa} shows all transitions for an input letter $\sigma$ from the states in the H-Safra tree for Streett determinization in Fig.~\ref{fig:ht3}.

\begin{figure}[htp]
\centerline{\includegraphics[height=2.7cm]{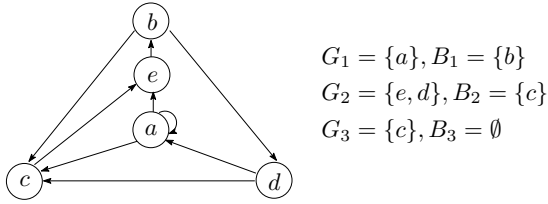}}\caption{Relevant fragment of a Streett automaton} \label{fig:nsa}
\end{figure}

\paragraph*{\rm \textbf{Step 1: Update}}
Replace the state label of every node $\tau$ in $H$ by $\bigcup_{q \in l(\tau)} \delta(q,\sigma)$. Call the resultant labelled tree $H_1$.

Let $H$ be the H-Safra tree for Streett determinization in Fig.\ref{fig:ht3} for the NSA whose transition is depicted in Fig.\ref{fig:nsa}. Fig.\ref{fig:step1} shows the tree structure $H_1$ resulting from $H$ after Step 1 of the construction procedure. Compared with $H$, state labels of all nodes in $H_1$ are updated.

\begin{figure}[htp]
\centerline{\includegraphics[height=4.3cm]{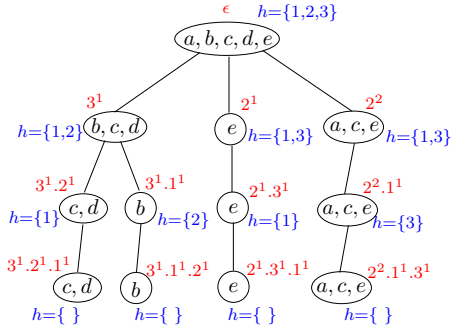}}\caption{Step 1 of the construction procedure} \label{fig:step1}
\end{figure}

\paragraph*{\rm \textbf{Step 2: Create siblings}}
Apply the following transformations to non-leaf nodes of $H_1$  from the root. Let $\tau$ be a node with $m$ children $\tau_1,\ldots,\tau_m$. Sequentially consider the following two cases for each $i\in [1..m]$ from $1$ to $m$:
\begin{enumerate}[a)]
      \item If $l(\tau_i)\cap G_{j(\tau_i)}\neq\emptyset$, add a youngest child $\tau'$ to $\tau$ with $l(\tau')=l(\tau_i)\cap G_{j(\tau_i)}$ and $h(\tau')=h(\tau)-\max\{[0..j(\tau_i))\cap(\{0\}\cup Mini([k]-h(\tau)))\}$, and remove the states in $l(\tau_i)\cap G_{j(\tau_i)}$ from $\tau_i$ and all its descendants; then
      \item if $l(\tau_i)\cap B_{j(\tau_i)}\neq\emptyset$, add a youngest child $\tau'$ to $\tau$ with $l(\tau')=l(\tau_i)\cap B_{j(\tau_i)}$ and $h(\tau')=h(\tau_i)$, and remove the states in $l(\tau_i)\cap B_{j(\tau_i)}$ from $\tau_i$ and all the descendants.
\end{enumerate}
Note that the names of the new created nodes are not defined currently. Then rearrange sibling nodes by the structural ordering from the second level to the last level.

 We use a simple example illustrated in Fig.\ref{fig:stor} to show how the sibling nodes are rearranged. For the siblings $\tau_1, \tau_2, \tau_3$ and $\tau_4$ in Fig.\ref{fig:stor} (a), we have $j(\tau_1)=2$, $j(\tau_2)=3$, $j(\tau_3)=1$, and $j(\tau_4)=2$. We rearrange the siblings according to the value of $j$  from the largest to the smallest. As for $\tau_1$ and $\tau_4$ with $j(\tau_1)=j(\tau_4)$, $\tau_4$ is younger than $\tau_1$, since the later the node is generated, the younger it is. It indicates that the relative order of nodes with the same $j$ will not change. The resultant tree after  structural ordering is shown in  Fig.\ref{fig:stor} (b). Compared with Fig.\ref{fig:stor} (a), the positions of $\tau_1$ and $\tau_2$, and $\tau_3$ and $\tau_4$ are swapped, respectively.

\begin{figure}[htp]
\centerline{\includegraphics[width=5.25cm]{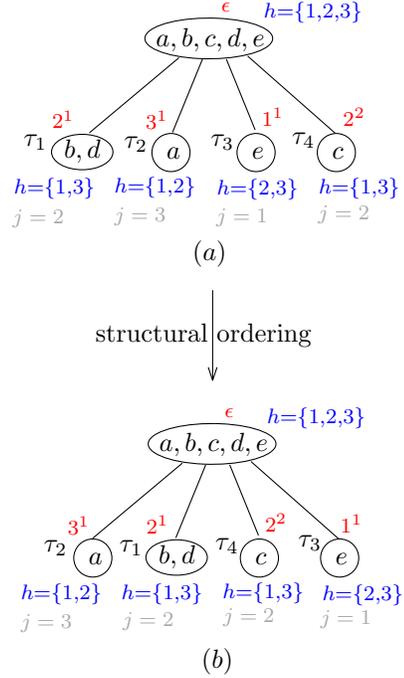}}\caption{Structural ordering} \label{fig:stor}
\end{figure}

After Step 2, the resultant labelled tree, called $H_2$, is shown in Fig.\ref{fig:step2}.
The nodes without names are new created in this step, and every node observes the structural ordering. The state labels of the nodes in grey will be deleted in Step 3.

\begin{figure}[htp]
\centerline{\includegraphics[height=4.3cm]{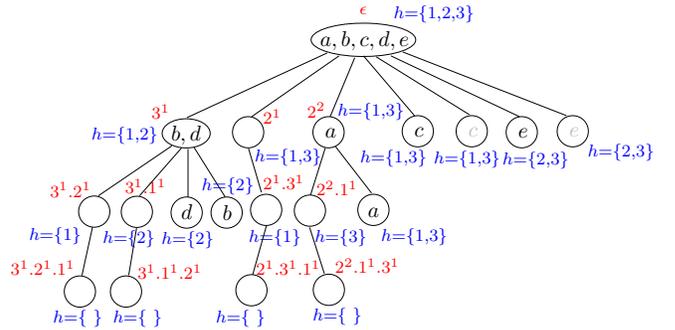}}\caption{Step 2 of the construction procedure} \label{fig:step2}
\end{figure}

\paragraph*{\rm \textbf{Step 3: Horizontal merge}}
For each node $\tau$ in $H_2$ starting from the root, and every state $q\in l(\tau)$, if $q$ also occurs in the state label of a sibling $\tau'$ of $\tau$ such that $j(\tau')<j(\tau)$, or $j(\tau')=j(\tau)$ and $\tau'$ is older than $\tau$, then remove $q$ from $\tau$ as well as all its descendants. Afterward, for any node $\tau$,  remove $\tau$ if $l(\tau)=\emptyset$. A removed node whose name is defined is called \emph{rejecting}.

Let $H_3$ be the resultant tree. Next, we define $\sig_{rej}=\{\tau \mid \tau$ is the rejecting node occurring in the current tree$\}$, called the \emph{rejecting signature} of the $\delta$-successor / transition being defined. The resulting tree is depicted in Fig.~\ref{fig:step3} with $\sig_{rej}=\{2^1,$ $3^1.2^1,3^1.1^1,2^1.3^1,2^2.1^1,$ $3^1.2^1.1^1,3^1.1^1.2^1,2^1.3^1.1^1,2^2.1^1.3^1\}$. In the resultant tree, the state labels of the siblings are pairwise disjoint and there exists no empty node. {Nevertheless, there may exist a node which is equal to each of its children in index label.}

\begin{figure}[htp]
\centerline{\includegraphics[height=3.7cm]{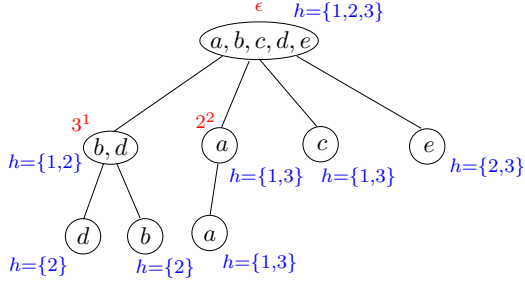}}\caption{Step 3 of the construction procedure} \label{fig:step3}
\end{figure}

\paragraph*{\rm \textbf{Step 4: Vertical merge}}
For each non-leaf $\tau$ in $H_3$ starting from the root, if the index label of each child is equal to $h(\tau)$, then remove all the children of $\tau$ as well as their descendants. The nodes whose descendants have thus been removed are called \emph{accepting}.

Let $H_4$ be the resultant tree. Next define $\sig_{acc}=\{\tau \mid \tau$ is the accepting node occurring in the current tree$\}$, called the \emph{accepting signature} of the $\delta$-successor / transition being defined. The resulting tree is depicted in Fig.~\ref{fig:step4} with $\sig_{acc}=\{2^2\}$. The state labels of the siblings are pairwise disjoint, {and no node  is equal to each of  its children in index label.} The names of nodes might not follow the new naming scheme. The nodes that will be renamed in Step 4 are drawn in red.

\begin{figure}[htp]
\centerline{\includegraphics[height=3.8cm]{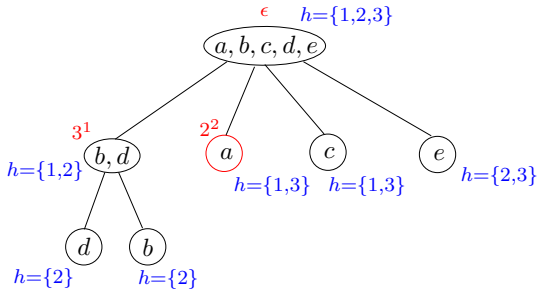}}\caption{Step 4 of the construction procedure} \label{fig:step4}
\end{figure}

\paragraph*{\rm \textbf{Step 5: Rename}}
Rename nodes whose names are defined in $H_4$ starting from the root by applying the naming scheme (Rule \ref{name:new}). {The nodes which should be renamed are also rejecting in this step.} Add these rejecting nodes to $\sig_{rej}$. As for this example, $\sig_{rej}=\{2^1,$ $3^1.2^1,3^1.1^1,2^1.3^1,2^2.1^1,3^1.2^1.1^1,$ $3^1.1^1.2^1,2^1.3^1.1^1,2^2.1^1.3^1,2^2\}$.

Call the resultant labelled tree $H_5$. Fig.~\ref{fig:step5} shows the tree that results from Step 5. All nodes observe the naming scheme. Then the resultant tree will spawn in the next step.

\begin{figure}[htp]
\centerline{\includegraphics[height=3.8cm]{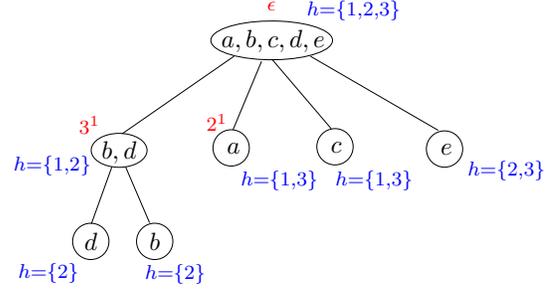}}\caption{Step 5 of the construction procedure} \label{fig:step5}
\end{figure}

\paragraph*{\rm \textbf{Step 6: Create children}}
Repeat the following procedure until no new nodes can be added: For each leaf $\tau$ in $H_5$, if $h(\tau)\neq\emptyset$ and $Mini([k]-h(\tau))\neq\emptyset$, add to $\tau$ a new child $\tau'$. Set $l(\tau')=l(\tau)$ and $h(\tau')=h(\tau)-\{\max(Mini([k]-h(\tau)))\}$. Then define names of the nodes which have not been named by the new naming scheme yet.

The resultant labelled tree is a H-Safra tree for Streett determinization, which we call $\hat{H}$. Note that given $H$ and $\sigma\in\Sigma$, there are a unique $\sigma$-successor $\hat{H}$,   $\sig_{acc}$,  and $\sig_{rej}$. Fig.~\ref{fig:step6} shows $\hat{H}$, called the $\sigma$-\emph{successor} of $H$, obtained through the six steps. Note that states in the resultant DRTA are H-Safra trees for Streett determinization, and the signatures $\sig_{acc}$, $\sig_{rej}$ are part of the transition relation of the DRTA transform.

\begin{figure}[htp]
\centerline{\includegraphics[height=4.8cm]{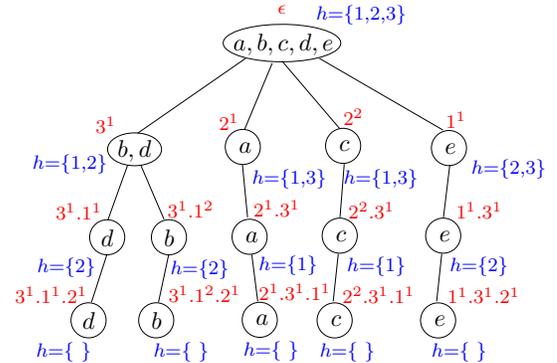}}\caption{Step 6 of the construction procedure} \label{fig:step6}
\end{figure}

Based on the six-step procedure, given a NSA $S=(\Sigma,Q,$ $Q_0,\delta,\langle G,B\rangle_{[k]})$, an equivalent DRTA $RT=(\Sigma,Q_{RT},Q_{RT0},$ $\delta_{RT},\lambda_{RT})$ can be obtained. Here $Q_{RT}$ is the set of H-Safra trees for Streett determinization w.r.t $S$; $Q_{RT0}$ is the initial H-Safra tree for Streett determinization; $\delta_{RT}$ is a transition relation that is established during the construction of H-Safra trees for Streett determinization, consisting of transitions (typically $\delta$) which are of the form $H\xrightarrow[\delta_{\sig}]{\sigma}\hat{H}$ where $\delta_{\sig}=(\sig_{acc},\sig_{rej})$ is the signature of the transition $\delta$, with $\sigma$ ranging over $\Sigma$, and $H$ ranging over $Q_{RT}$; and $\lambda_{RT}=\{(A_{I1},R_{I1}),\ldots,(A_{Ik},R_{Ik})\}$ is the Rabin acceptance condition. Note that, in each Rabin pair $(A_{I},R_{I})$, $I$ ranges over the names appearing in the H-Safra trees for Streett determinization. $A_{I}$ is the set of transitions through which node $\tau$ with name being $I$ is accepting, while $R_{I}$ is the set of transitions through which node $\tau$ with name being $I$ is rejecting.

Given an input $\omega$-word $\alpha$: $\omega\rightarrow\Sigma$, we call the sequence $\Pi=(H_0,\alpha(0),H_1)$ $(H_1,\alpha(1),H_2)(H_2,\alpha(2),H_3)\ldots$ of transitions where $H_0=H_I$, and for all $i\in\omega$, $H_{i+1}$ is the $\alpha(i)$-successor of $H_i$, the \emph{H-Safra Streett trace} of the NSA $S$ over $\alpha$. We view the H-Safra Streett trace of $S$ over $\alpha$ as the \emph{run} of the DRTA $RT$ over $\alpha$. Then we say that $\alpha$ is \emph{accepted by the DRTA} if $\mathsf{Inf}(\Pi)\cap A_{Ii}\neq\emptyset$ and $\mathsf{Inf}(\Pi)\cap R_{Ii}=\emptyset$ for some $(A_{Ii},R_{Ii})$.

Let $RT$ be the DRTA obtained from the given NSA $S$. Theorem \ref{thm:RT} is formalized and proved.

\begin{Thm}\label{thm:RT}
$L(RT)=L(S)$.
\end{Thm}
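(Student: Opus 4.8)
The plan is to establish the two inclusions $L(RT)\subseteq L(S)$ (soundness) and $L(S)\subseteq L(RT)$ (completeness) by the standard Safra-style argument, adapted to the transition-based, implicitly-named setting, and to lean on the $\mu$-Safra construction of Section~\ref{Sec:history} as scaffolding. The crucial bridge is Lemma~\ref{complex:HST}: the states of $RT$ are in bijection with the structural ordered trees with state and index labels occurring in the Cai--Zhang construction, and a step-by-step inspection shows that Steps~1--6 of Section~\ref{sec:NSA2DRTA} act on this underlying structure exactly as Steps~1--6 of Section~\ref{Sec:history} do. Hence, for every input word $\alpha$, the run $\Pi$ of $RT$ on $\alpha$ projects onto the same infinite sequence of labelled trees as the run of the Cai--Zhang DRA on $\alpha$; the genuinely new ingredients are only (a) that node identity is given by the intrinsic name $M_n$ --- a function of a node's index label and its position in the tree --- rather than by a recycled name bucket, and (b) that accepting/rejecting information is emitted on transitions ($\sig_{acc},\sig_{rej}$) rather than stored on states ($F$, $E$). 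It therefore suffices to show that a name $I$ which is \emph{good} for the transition-Rabin condition of $RT$ (i.e. $\mathsf{Inf}(\Pi)\cap A_I\neq\emptyset$ and $\mathsf{Inf}(\Pi)\cap R_I=\emptyset$) exists iff $\alpha$ has an accepting run in $S$; well-definedness of $RT$ (unique $\hat H,\sig_{acc},\sig_{rej}$ per $(H,\sigma)$, finiteness of $Q_{RT}$ by Lemmas~\ref{TreeSize1} and~\ref{TreeSize2}) is already recorded in Section~\ref{sec:NSA2DRTA}.

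For completeness, starting from an accepting run $\rho$ of $S$ on $\alpha$, I would let $\tau_j\in H_j$ be the node hosting $\rho$ at step $j$ --- the unique node carrying $\rho(j)$ to which $\rho$'s segment was routed through the create-sibling/create-children splits --- and argue that the host stabilizes: using the accepting-run hypothesis together with the fact that the trees have depth $\le\mu+1$ and bounded width (Lemmas~\ref{TreeSize1}, \ref{TreeSize2}) and that the name set is finite, there are $m$ and a name $I$ such that for all $j\ge m$ the host $\tau_j$ carries name $I$, is never rejecting (never removed in Step~3, never renamed in Step~5), and therefore has a constant index label $h^\ast$. Because $\rho$ satisfies the Streett condition one then shows $\tau_j$ is accepting (Step~4) for infinitely many $j$: for each index $i\notin h^\ast$ the hosting forces $\rho$ through $G_i$ repeatedly, and for each $i\in h^\ast$ the fact that $\rho$ eventually stops visiting $B_i$ forbids an infinite sequence of $B$-resets in the subtree, so the vertical-merge test at $\tau_j$ succeeds infinitely often. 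Thus $H_j\to H_{j+1}\in A_I$ infinitely often and $\notin R_I$ for $j\ge m$, i.e. $\alpha\in L(RT)$.

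For soundness, suppose $\mathsf{Inf}(\Pi)\cap A_I\neq\emptyset$ and $\mathsf{Inf}(\Pi)\cap R_I=\emptyset$ for some name $I$, and pick $m$ past the last $R_I$-transition. Then past $m$ the name $I$ is occupied by a single never-renamed, never-deleted node $\tau_j$ of constant index label $h^\ast$, accepting at steps $j_0<j_1<\cdots$. I would build the tree whose branches are the run segments of $S$ continuously hosted at $I$ from step $m$ onward; each accepting step forces the subtree under $\tau_j$ to be rebuilt from the nonempty label $l(\tau_j)$ while preserving continuity, so this tree is infinite, and it is finitely branching, so K\"onig's lemma yields a run $\rho$ of $S$ on $\alpha$ with $\rho(j)\in l(\tau_j)$ for all $j\ge m$. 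It remains to check that $\rho$ meets the Streett condition index by index: for $i\notin h^\ast$, between consecutive accepting steps the hosting forces $\rho$ through $G_i$, so $\mathsf{Inf}(\rho)\cap G_i\neq\emptyset$; for $i\in h^\ast$, the invariant ``a run segment hosted at a node whose index label contains $i$ has not visited $B_i$ since that node (or its last ancestor still containing $i$) was created, except by being pushed into a $B$-child'', together with the fact that $\tau_j$ is never re-created past $m$, yields $\mathsf{Inf}(\rho)\cap B_i=\emptyset$. Hence $\rho$ is accepting and $\alpha\in L(S)$.

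The step I expect to be the main obstacle is making precise, and proving invariant under all six steps, the relation between ``being hosted at a node with index label $h$'' and the $G/B$-history of run segments of $S$ --- this is the heart of both directions. Two places deserve particular care, because they are where the move from explicit to implicit naming could a priori interfere with the inherited Cai--Zhang argument: the structural reordering at the end of Step~2 (which must not disturb the host/invariant correspondence) and the rename step, Step~5, where one must verify that recomputing names by Rule~\ref{name:new} declares rejecting exactly the node lineages whose ``structural identity'' has changed, so that a lineage is good for $RT$ iff the corresponding lineage is good in the $\mu$-Safra construction. Once the invariant is in place, everything else is either bookkeeping or inherited from the already-proved correctness of the $\mu$-Safra construction, which also serves as an independent sanity check on the two directions above.
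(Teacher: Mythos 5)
Your plan follows essentially the same route as the paper's proof: both directions reduce to the classical Safra/Cai--Zhang argument, with the only genuinely new obligation being to show that the (now implicit) name of the eventually-stable, infinitely-accepting node itself stabilizes, and that deleted or renamed nodes are exactly the rejecting ones. The paper discharges precisely the point you flag at the rename step by observing that a node is renamed only when a left sibling with the \emph{same} index label is removed, and since newly created siblings with that index label are placed to its right, this can happen only finitely often --- a sharper justification than your appeal to finiteness of the name set, which by itself would not rule out a name changing infinitely often.
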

\begin{proof}
This proof is similar to the one in \cite{CZ12}.

$\Leftarrow$: {This part of proof is almost identical to the one in \cite{Safra92}. We ought to show that if $\Pi=H_0H_1\cdots$ is a run of $RT$ over an infinite word $\alpha=\alpha_0\alpha_1\cdots\in L(S)$, then (1) a node $\tau$ exists in every state in $\Pi$ from some point on, (2) $\tau$ turns accepting infinitely often, and (3) $\tau$ has a fixed name $Ii$. The argument in \cite{Safra92} guarantees the existence of such a node $\tau$ with the first two properties. The only complication comes from renaming. We have the situation that $\tau$ with name $Ii$ exists in $H_m$, but it is renamed to $Ii'$ in the succeeding state $H_{m+1}$. This happens when the left sibling $\tau'$, whose index label $h(\tau')=h(\tau)$, of $\tau$ in $H_m$ is removed from $H_{m+1}$. However, it can only happen to $\tau$ finitely many times, as the left siblings with the same index labels of $\tau$ are finite and the new created siblings whose index labels are the same as $\tau$ will be placed to the right of $\tau$. Therefore, $\tau$ is eventually assigned a fixed name $Ii$, which provide us the third property.
}

$\Rightarrow$: {
Given an $\omega$-word $\alpha=\alpha_0\alpha_1\cdots\in L(RT)$, there exists an accepting run $\Pi=H_0H_1\cdots$ of $RT$ over $\alpha$. We ought to show that there is also an accepting run of $S$ over $\alpha$. $\Pi$ is accepting means that there exists an $Ii\in I$ such that $\Pi$ eventually never visits $R_{Ii}$, but visits $A_{Ii}$ infinitely often. Since renamed nodes or deleted nodes are rejecting, all nodes named by $Ii$ have to be the same node. It follows that a node $\tau$ eventually stays in every state in a suffix of $\Pi$ and $\tau$ turns accepting infinitely often. The rest of the proof is the same as the one in \cite{Safra92}.}
\end{proof}

\begin{Thm}\label{thm:complexity}
Given a NSA $S$ with $n$ states and $k$ Streett pairs, we can construct a DRTA with $n^{5n}(n!)^n$ states, $O(n^{n^2})$ Rabin pairs for $k=\omega(n)$ and $n^{5n}k^{nk}$ states, $O(k^{nk})$ Rabin pairs for $k=O(n)$ that recognizes the language $L(S)$.
\end{Thm}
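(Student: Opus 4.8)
The statement concerns only the \emph{size} of the DRTA $RT$ built in Section \ref{sec:NSA2DRTA}; its correctness is already Theorem \ref{thm:RT}, so the plan is a pure counting argument. The first step is Lemma \ref{complex:HST}: the states of $RT$ are exactly the H-Safra trees for Streett determinization reachable from the initial one, and by that lemma their number equals the number of distinct \emph{structural ordered trees with state and index labels} $T_{si}$ ($\mu$-Safra trees with the names, $E$ and $F$ deleted) that can arise. Hence it suffices to bound the number of such $T_{si}$ for a NSA with $n$ states and $k$ Streett pairs, where $\mu=\min(n,k)$; the two regimes of the theorem are simply $\mu=n$ (when $k=\omega(n)$) and $\mu=k$ (when $k=O(n)$). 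As a sanity check this count equals the one in the Corollary to the Cai--Zhang construction divided by $n!$, the factor contributed by the explicit batch-mode names and suppressed by the implicit naming $M_n$; but I would prove the bound directly, since the per-tree naming multiplicity is not literally constant.

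The direct count factors a $T_{si}$ into three essentially independent parts. (i) \emph{Shape}: by Lemma \ref{TreeSize1} a $T_{si}$ has at most $n$ left spines with at most $\mu+1$ nodes each, hence at most $n(\mu+1)$ nodes (Lemma \ref{TreeSize2}) and at most $n$ leaves; the number of such ordered shapes is at most exponential in $n(\mu+1)$, which is absorbed into the bound of the statement. (ii) \emph{State label} $l$: since a node's label is the union of its children's labels and sibling labels are disjoint, each surviving state of $Q$ follows a unique root-to-leaf path and settles in a unique leaf, so $l$ is fixed by a (surjective) map from the surviving states into the at most $n$ leaves, giving at most $n^{O(n)}$ choices. (iii) \emph{Index label} $h$: $h(\tau_r)=[k]$ is fixed and $h(\tau)=h(\tau_p)\setminus\{j(\tau)\}$ with $j(\tau)\in\{0\}\cup h(\tau_p)$ for every non-root node, so $h$ is determined by the family $(j(\tau))_\tau$. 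When $k=O(n)$, the crude bound $|h(\tau_p)|\le k$ on the choices of each $j(\tau)$, over the at most $n(\mu+1)=n(k+1)$ nodes, already gives at most $k^{nk}$ admissible families; when $k=\omega(n)$ one must use, beyond the $\le\mu+1$ nodes per spine, that after the $Mini$ reduction of Definition \ref{Cover-Mini} only $O(n)$ index classes are ever active, so that the indices removed along one left spine are distinct members of a size-$\mu$ pool and there are at most $\mu!$ patterns per spine, hence at most $(n!)^{n}$ families over the $\le n$ spines. Multiplying (i)--(iii) and collecting the $n^{O(n)}$ terms into $n^{5n}$ yields $n^{5n}(n!)^{n}$ for $k=\omega(n)$ and $n^{5n}k^{nk}$ for $k=O(n)$.

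For the acceptance condition, $\lambda_{RT}$ has one Rabin pair per name occurring in a reachable tree, so the number of pairs equals the number of distinct names $M_n(\tau)$; by Rule \ref{name:new} such a name is a string of length at most the tree depth whose tokens $j(\tau)^{i+1}$ are controlled by the (active) index labels and the sibling multiplicities, and bounding this by the same quantities gives $O(n^{n^2})$ pairs for $k=\omega(n)$ and $O(k^{nk})$ for $k=O(n)$, as claimed. I expect the genuinely delicate point to be part (iii) in the regime $k=\omega(n)$: a priori the index labels range over arbitrary subsets of $[k]$ with $k$ unbounded, yet the bound must be independent of $k$, which forces one to make precise that $Mini$ identifies Streett pairs with equal $G$-unions so that at most $\mu=n$ genuinely distinct removals occur along any left spine --- essentially re-deriving the relevant part of the tree-size analysis of \cite{CZ12} in the naming-free setting. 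The shape and state-label counts of (i)--(ii) and the Rabin-pair bound are routine once the node bound $n(\mu+1)$ of Lemma \ref{TreeSize2} is in hand.
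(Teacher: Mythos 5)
Your overall route is the same as the paper's: reduce to counting structural ordered trees with state and index labels via Lemma~\ref{complex:HST}, factor that count as (tree shape) $\times$ (state labelling) $\times$ (index labelling), and bound the number of Rabin pairs separately by counting the names generated by Rule~\ref{name:new}. The paper simply imports the three factors ($n^{4n}$ shapes, $n^{n}$ state labellings, $(n!)^{n}$ resp.\ $k^{nk}$ index labellings) from \cite{CZ12,CZ13}, whereas you try to re-derive them; your parts (ii) and (iii) are consistent with what the paper cites, and you correctly identify the $k=\omega(n)$ index-label count as the delicate point.

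The genuine gap is in your part (i). You bound the number of ordered tree shapes by ``exponential in $n(\mu+1)$'' and claim this is absorbed into the stated bound. It is not: a Catalan-type count of ordered trees on up to $n(\mu+1)$ nodes is $2^{\Theta(n(\mu+1))}$, which for $k=\omega(n)$ (so $\mu=n$) is $2^{\Theta(n^2)}$ and is \emph{not} dominated by $n^{4n}=2^{\Theta(n\log n)}$; multiplying it into your product gives $2^{\Theta(n^2)}\cdot n^{O(n)}\cdot (n!)^{n}$, which exceeds the asserted $n^{5n}(n!)^{n}$ (the same overshoot occurs for $k=\Theta(n)$ against $n^{5n}k^{nk}$). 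Your accounting therefore only recovers the bound up to the exponent, not the exact expressions the theorem states. The repair uses an ingredient you mention but do not exploit: by Lemma~\ref{TreeSize1} the tree consists of at most $n$ left spines, and a shape is determined by each spine's length ($\le\mu+1$) together with the attachment point and sibling position of its head, which yields only $n^{O(n)}$ shapes (the $n^{4n}$ of \cite{CZ12,CZ13}), independent of the quadratic node count. You need this spine-based count, not the raw node count. Finally, your Rabin-pair estimate is left as a gesture (``bounding this by the same quantities''); the paper makes it concrete by writing a name as a string $x_1^{y_1}.x_2^{y_2}.\cdots.x_{n+\mu}^{y_{n+\mu}}$ with exactly $n$ of the $x_i$ equal to $0$, giving $O(\mu^{n\mu})$ names and hence the stated pair counts.
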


\begin{proof}
For the state complexity,
by Lemma \ref{complex:HST}, we can calculate the number of structural ordered trees with state and index labels (i.e. $\mu$-Safra trees for Streett determinization without names, $E$ and $F$).
 {According to the result in  \cite{CZ12,CZ13}, there are at most $n^{4n}$ structural ordered trees. For every structural ordered tree, there are at most $n^n$ possibilities of state labeling. Besides, the number of possibilities of index labeling is bounded by $(n!)^n$ for $k=\omega(n)$, and $k^{nk}$ for $k=O(n)$. Thus, the state complexity is $n^{4n}\cdot n^n\cdot (n!)^n = n^{5n}(n!)^n$ for $k=\omega(n)$, and $n^{4n}\cdot n^n\cdot k^{nk} = n^{5n}k^{nk}$ for $k=O(n)$.}


For the index complexity, we have that for any branch from the root to a leaf of a H-Safra tree, there are at most $\mu$ nodes, say $\tau$, such that $j(\tau)\neq 0$. Moreover, a H-Safra tree contains at most $n$ nodes, say $\tau$, with $j(\tau)=0$ \cite{CZ12}.
Therefore, there are at most $n+\mu$ nodes in a branch. The name of a node is denoted by $x_1^{y_1}.x_2^{y_2}.\cdots .x_{n+\mu}^{y_{n+\mu}}$, where $x_i\in\{0,j_1,j_2,\ldots,j_{\mu}\}$ ($j_m$ is obtained by $Mini$ for $1\leq m\leq \mu$) and $y_i\in\{1,2,\ldots,n\}$. The number of $i$ such that $x_i=0$ is exactly $n$.  Thus, the number of names is
$$\dbinom{n}{n+\mu}\cdot n^n\cdot (\mu !)^n = O(\mu^{n\mu}).$$
{Since $\mu=\min(n,k)$, for $k=\omega(n)$, by replacing $\mu$ with $n$, the index complexity $O(n^{n^2})$ is obtained; for $k=O(n)$, by replacing $\mu$ with $k$, $O(k^{nk})$ is obtained.}
\end{proof}

\subsection{Construction of LIR-H-Safra Trees for Streett Determinization}\label{sec:NSA2DPTA}
Fix a NSA $S=(\Sigma,Q,Q_0,\delta,\langle G,B\rangle_k)$. The \emph{initial LIR-H-Safra tree for Streett determinization} $LH_I$ of $S$ is $H_I$ with a LIR. The order of all nodes in the LIR follows the order a node is generated.

Given a LIR-H-Safra tree $LH$ of $S$ and a $\sigma\in\Sigma$, we construct a new LIR-H-Safra tree $\hat{LH}$, called the $\sigma$-successor of $LH$, and the signature $\sig$ of the transition, also in six steps similar to the transformation from NSA to DRTA.
{
The differences are: (1) For a node $\tau$ in $LH$, if $p(\tau)$ changes during the transformation, $\tau$ is rejecting; otherwise, $\tau$ is stable.
(2) The signature is defined by $\sig=(st,p)$. If there is no accepting or rejecting node, $\sig=\emptyset$. Otherwise, in the case $\hat{\tau}$ is the node with the minimal position in the LIR among accepting or rejecting nodes in the transformation,  it has $p=p(\hat{\tau})$, $st:=\acc$ if $\hat{\tau}$ is accepting, and $st:=\rej$ if $\hat{\tau}$ is rejecting.}
{As a result, an equivalent DPTA $PT=(\Sigma,Q_{PT},Q_{PT0},\delta_{PT},\lambda_{PT})$ can be obtained. Here $Q_{PT}$ is the set of LIR-H-Safra trees for Streett determinization w.r.t $S$; $Q_{PT0}$ is the initial LIR-H-Safra tree for Streett determinization; $\delta_{PT}$ is a transition relation that is established during the construction of LIR-H-Safra trees for Streett determinization, consisting of transitions (typically $\delta$) which are quintuples of the form $LH\xrightarrow[\delta_{\sig}]{\sigma}\hat{LH}$ where $\delta_{\sig}$ is the signature of the transition $\delta$, with $\sigma$ ranging over $\Sigma$, and $LH$ ranging over $Q_{PT}$; $\lambda_{RT}=\{\lambda_2,\lambda_3,\cdots,\lambda_{2n(\mu+1)},\lambda_{2n(\mu+1)+1}\}$ is the parity acceptance condition. Notice that for each $1\leq i \leq 2n(\mu+1)$,
$$
\begin{array}{llll}
{\lambda_{2i}}&:=& \set{\delta\in\delta_{PT} \mid \delta_{\sig}=(\acc,i)}\\

{\lambda_{2i-1}}&:=& \set{\delta\in\delta_{PT} \mid \delta_{\sig}=(\rej,i)}\\

{\lambda_{2n(\mu+1)+1}}&:=& \set{\delta\in\delta_{PT} \mid \delta_{\sig}=\emptyset \mbox{ or } \delta_{\sig}=(\rej,1)}
\end{array}
$$
}

Given an input $\omega$-word $\alpha : \omega\rightarrow\Sigma$, we call the sequence $\Pi=(LH_0,\alpha(0),LH_1)$ $(LH_1,\alpha(1),LH_2)(LH_2,\alpha(2),LH_3)\ldots$ of transitions such that $LH_0=LH_I$, and for all $i\in\omega$, $LH_{i+1}$ is the $\alpha(i)$-successor of $LH_i$, the \emph{LIR-H-Safra Streett trace} of the NSA $S$ over $\alpha$. We view the LIR-H-Safra Streett trace of $S$ over $\alpha$ as the \emph{run} of the DPTA $PT$ over $\alpha$. Then we say that $\alpha$ is \emph{accepted by the DPTA} if the minimal index $k$ for which $\mathsf{Inf}(\Pi)\cap \lambda_k \neq\emptyset$ is even.

Let $PT$ be the DPTA obtained from the given NSA $S$. Theorem \ref{thm:PT} is formalized.

\begin{Thm}\label{thm:PT}
$L(PT)=L(S)$.
\end{Thm}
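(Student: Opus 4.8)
The plan is to follow the template of Theorem~\ref{thm:RT}, since $L(PT)=L(S)$ asserts the same language equivalence for the DPTA transform that Theorem~\ref{thm:RT} asserts for the DRTA transform, and the two constructions share the first six steps verbatim except for how signatures are recorded. I would first recall that a parity condition ``the least index $k$ with $\mathsf{Inf}(\Pi)\cap\lambda_k\neq\emptyset$ is even'' is a special Rabin condition, and that the $\lambda_i$ here are indexed so that smaller priority corresponds to a position closer to the root in the LIR (i.e.\ an older, more persistent node). So the heart of the argument is to relate the LIR-signature $\delta_{\sig}=(st,p)$ seen infinitely often with minimal $p$ to the ``eventually persistent, infinitely often accepting'' node that drives the Rabin/Safra argument.

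For the $\Leftarrow$ direction (if $\alpha\in L(S)$ then $\Pi$ is accepting), I would reuse the Safra-style argument already invoked in the proof of Theorem~\ref{thm:RT}: there exists a node $\tau$ that is present in every tree of $\Pi$ from some point on and that turns accepting infinitely often. The new content is bookkeeping about the LIR position $p(\tau)$. I would argue that once $\tau$ is permanently present, its position $p(\tau)$ in the LIR can only decrease (nodes older than $\tau$ that disappear shift $\tau$ left, and $\tau$ is never reinserted), hence $p(\tau)$ stabilises to some value $p_0$. After that point $\tau$ is never rejecting (its position does not change, and it is not removed), so the only signatures with position $\le p_0$ that occur infinitely often are $(\acc,p_0)$ coming from $\tau$ itself — I must check that any node with LIR-position strictly less than $p_0$ is, from some point on, also permanently stable, so it contributes neither $(\acc,\cdot)$ nor $(\rej,\cdot)$ with a smaller position infinitely often. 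This gives that the minimal $i$ with $\mathsf{Inf}(\Pi)\cap\lambda_i\neq\emptyset$ is $2p_0$, which is even, so $\Pi$ is accepting.

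For the $\Rightarrow$ direction, suppose $\Pi$ is accepting, so the least $i$ with $\mathsf{Inf}(\Pi)\cap\lambda_i\neq\emptyset$ is even, say $i=2p_0$; by the definition of $\lambda_{2p_0}$ the signature $(\acc,p_0)$ occurs infinitely often, while no signature $(\rej,p)$ or $(\acc,p)$ with $p<p_0$ and no $\emptyset$-or-$(\rej,1)$ signature with smaller index occurs infinitely often. I would then show that the node sitting at LIR-position $p_0$ is eventually fixed: since a change of position makes a node rejecting (signature $(\rej,p_0)$ or smaller) and since $(\rej,p_0)$ cannot occur infinitely often (it would give $\lambda_{2p_0-1}$, contradicting minimality of $2p_0$) — more precisely, any rejection at position $\le p_0$ infinitely often contradicts minimality — the node at position $p_0$ stabilises to a single node $\tau$ that turns accepting infinitely often. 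From here the remainder is identical to the corresponding part of Theorem~\ref{thm:RT} / \cite{Safra92}: a permanently-present, infinitely-often-accepting node yields an accepting run of $S$ on $\alpha$.

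The main obstacle I expect is making the position-monotonicity bookkeeping airtight: unlike a name, the LIR-position of a node is a global quantity that shifts whenever \emph{any} older node is deleted or reinserted, so I need a clean invariant — e.g.\ ``once every node of index-position $\le p_0$ is permanently present, their relative LIR order is frozen'' — together with the fact (to be extracted from Step~2's reordering and the LIR update rule) that newly created nodes are always appended after all currently stable nodes, so they cannot push a stable node's position down. Once that invariant is established the two directions fall out mechanically, and I would simply note that the parity acceptance is well-defined because $2n(\mu+1)$ bounds the number of LIR positions by Lemma~\ref{TreeSize2}.
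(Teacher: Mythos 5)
Your proposal is correct and follows essentially the same route as the paper: the paper also reduces the claim to the equivalence with the DRTA of Theorem~\ref{thm:RT} via the ``eventually persistent, infinitely-often-accepting node'' characterization, argues in one direction that this node's LIR position is non-increasing and eventually frozen (so that all smaller positions are stable and no smaller odd priority recurs), and in the other direction that a minimal even priority $2i$ recurring infinitely often forces the node at position $i$ to stabilise and be accepting infinitely often. The position-monotonicity bookkeeping you flag as the main obstacle is treated in the paper at the same level of detail you sketch, so no gap relative to the paper's own argument.
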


\begin{proof}
As it has been proved that $S$ is equivalent to the DRTA $RT$  in Section~\ref{sec:NSA2DRTA}, we further prove this theorem by showing $L(PT)=L(RT)$.

$\Leftarrow$: {Given an $\omega$-word $\alpha\in L(RT)$, there is a node $\tau$ that is accepting infinitely often and its name keeps unchanged eventually in the H-Safra Streett trace about $\alpha$. {It indicates that the position of $\tau$ in the LIR is non-increasing. Note that the position of $\tau$ in the LIR decreases when a node $\hat{\tau}$ at a smaller position with $h(\hat{\tau})\neq h(\tau)$ is removed.
However, this can only happen for finitely many times.} The node $\tau$ will eventually remain in the same position $p$ in the LIR and every node $\tau'$ with $p(\tau')\leq p$ will be stable. Hence, no odd priority $<2p$ occurs infinitely often. And from that time onward, the node $\tau$ is accepting infinitely many times.
Therefore, the smallest priority occurring infinitely often is even. It indicates that $\alpha\in L(PT)$.
}

$\Rightarrow$: {Let $\alpha$ be an $\omega$-word in $L(PT)$. There is a LIR-H-Safra Streett trace $\Pi$ and an index $2i$ such that $\mathsf{Inf}(\Pi)\cap\lambda_{2i}\neq\emptyset$ and $\mathsf{Inf}(\Pi)\cap\lambda_k = \emptyset$ for any $k<2i$. It indicates that each node $\tau$ with $p(\tau)\leq i$ remains stable in the LIR from a time onward. That is $\tau$ is not rejecting. Meanwhile, the node on position $i$ is accepting infinitely often from that time onward. Thus $\alpha\in L(RT)$.
}
\end{proof}

\begin{Thm}\label{thm:complexity-NS2PT}
Given a NSA $S$ with $n$ states and $k$ Streett pairs, we can construct a DPTA with $3(n(n+1)-1)!(n!)^{n+1}=2^{O(n^2 \log n)}$ states, $2n(n+1)$ priorities for $k=\omega(n)$ and $3(n(k+1)-1)!n!k^{nk}=2^{O(nk \log nk)}$ states, $2n(k+1)$ priorities for $k=O(n)$ that recognizes the language $L(S)$.
\end{Thm}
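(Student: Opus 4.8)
The plan is to separate the two numerical claims; that the constructed DPTA $PT$ recognizes $L(S)$ is already established in Theorem~\ref{thm:PT}, so the remaining work is purely enumerative — bounding the number of priorities and the number of states of $PT$.

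The priority count is the easy half. By the definition of $\lambda_{PT}$ every priority is $2p$ or $2p-1$ for some LIR-position $p$, or the top priority $2n(\mu+1)+1$, and by Lemma~\ref{TreeSize2} a H-Safra tree — hence a LIR-H-Safra tree — has at most $n(\mu+1)$ nodes, so $p$ ranges over $\{1,\dots,n(\mu+1)\}$. After folding $\sig=\emptyset$ and $\sig=(\rej,1)$ into the top index, the priorities occupy the interval $\{2,\dots,2n(\mu+1)+1\}$, i.e. $2n(\mu+1)$ priorities; substituting $\mu=\min(n,k)$ gives $2n(n+1)$ for $k=\omega(n)$ and $2n(k+1)$ for $k=O(n)$.

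For the state count, set $m=n(\mu+1)$. I would regard a LIR-H-Safra tree as three pieces of data — its \emph{shape together with the LIR}, its state labels, and its index labels — and bound each. For the shape-with-LIR I would count incrementally in the spirit of Schewe's history trees with later introduction records \cite{Sven09,ScheweV14}: enumerating the nodes in LIR order $v_1,v_2,\dots,v_{m'}$ with $m'\le m$, the node $v_1$ must be the root and, when $v_i$ is introduced, it becomes a child of one of the $i-1$ already-present nodes, its position among siblings being then forced by the structural ordering $stor$ and by the invariant that a node precedes in the LIR any later sibling with the same index label. This gives at most $\prod_{i=2}^{m'}(i-1)=(m'-1)!$ structures of a given size and at most $\sum_{m'=1}^{m}(m'-1)!\le 3(m-1)!$ in total. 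For the state labels, the key gain over the plain H-Safra analysis of Theorem~\ref{thm:complexity} (which spends a factor $n^{n}$ there) is that in the LIR setting the generation order recorded by the LIR, together with the already-counted shape, determines the distribution of states over the leaves up to the choice of a linear order on $Q$, contributing only $n!$. The index labels are constrained by the $Mini$-rule exactly as in \cite{CZ12}, contributing at most $(n!)^{n}$ for $k=\omega(n)$ and $k^{nk}$ for $k=O(n)$. Multiplying the three factors yields at most $3(m-1)!\,n!\,(n!)^{n}=3(n(n+1)-1)!(n!)^{n+1}$ LIR-H-Safra trees for $k=\omega(n)$ and $3(m-1)!\,n!\,k^{nk}=3(n(k+1)-1)!\,n!\,k^{nk}$ for $k=O(n)$, and Stirling's approximation turns these into $2^{O(n^2\log n)}$ and $2^{O(nk\log nk)}$ respectively.

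The main obstacle I anticipate is justifying the two places where this departs from the DRTA analysis: first, that fixing the LIR genuinely collapses the tree-shape and state-label freedom as asserted — the naive route of multiplying the H-Safra count of Theorem~\ref{thm:complexity} by the number of linear extensions of the left-sibling order overshoots by an $n^{\Theta(n)}$ factor, so the proof must show that the LIR pins down shape and state distribution simultaneously; and second, that the structural-ordering step of the successor construction, together with the ``earlier-in-LIR for equal-index siblings'' invariant, really does determine sibling order from the LIR alone — this is the same invariant used in the $\Leftarrow$ direction of Theorem~\ref{thm:RT} and behind the $p(\tau)$-monotonicity in Theorem~\ref{thm:PT}, but here it must be pushed to control the counting. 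Everything else is routine bookkeeping.
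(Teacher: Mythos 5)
Your proposal is correct and follows essentially the same route as the paper: the priority count is read off the definition of $\lambda_{PT}$ with positions ranging over at most $n(\mu+1)$ nodes, and the state count is obtained by the Schewe-style enumeration of LIR-trees with state labels (yielding $(n(\mu+1)-1)!\,n!$ with a constant $3$ absorbed from summing over tree sizes and root-label sizes) multiplied by the Cai--Zhang bound on index labelings. The only difference is presentational: the paper counts shape, LIR and state labels jointly via $t(n,m)$ and the halving lemma $t(n,m-1)\le\frac{1}{2}t(n,m)$ cited from Schewe, whereas you factor the same quantity as $3(m-1)!\cdot n!$; both give the stated bound.
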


\begin{proof}
The number of nodes in a LIR-H-Safra tree is also at most $n(\mu+1)$. Similar to the analysis in \cite{Sven09}, there are at most $(n(\mu+1)-1)!$ LIR-H-Safra trees without state and index labels.
For the state labelling function, {let $t(n,m)$ denote the number of LIR-H-Safra trees without index labels, say $\tilde{LH}$, such that there are $m$ nodes in $\tilde{LH}$ and $n$ states in the state label of the root of $\tilde{LH}$}. Fist, we have $t(n,n(\mu+1))=(n(\mu+1)-1)!n!$. A conclusion has been proved in \cite{Sven09} that for every $m\leq n(\mu+1)$, $t(n,m-1)\leq\frac{1}{2}t(n,m)$.
Hence, $\sum_{i=1}^{n(\mu+1)}t(n,i)\leq 2(n(\mu+1)-1)!n!$.
If there are $n'$ $(n'<n)$ states labelled in the root, the number of the LIR-H-Safra trees without index labels is $2(n'(\mu'+1)-1)!n'!\tbinom{n'}{n}\leq 2(n'(\mu'+1)-1)!n!$, where $\mu'=\min(n',k)$.
Thus, the number of LIR-H-Safra trees without index label is $\sum_{n'=1}^{n}2(n'(\mu'+1)-1)!n! \leq 3(n(\mu+1)-1)!n!$. 
By the result in  \cite{CZ12,CZ13}, the number of possibilities of index labeling is bounded by $(n!)^n$ for $k=\omega(n)$, and $k^{nk}$ for $k=O(n)$. It follows that the number of LIR-H-Safra trees is at most $3(n(n+1)-1)!(n!)^{n+1} = 2^{O(n^2 \log n)}$ for $k=\omega(n)$ by replacing $\mu$ with $n$ and $3(n(k+1)-1)!n!k^{nk} = 2^{O(nk \log nk)}$ for $k=O(n)$ by replacing $\mu$ with $k$. 
\end{proof}

\section{Lower Bound Complexity}\label{sec:lower bound}
As for the state lower bound, it means the minimum states required by the equivalent deterministic automata, regardless of whether the acceptance condition is state-based or transition-based.
In this section, we prove a lower bound state complexity for determinization construction from NSA to DR(T)A, which exactly matches the state complexity of the proposed  determinization construction.
Further, we put forward a lower bound state complexity for determinization construction from NSA to DP(T)A, which is the same as the state complexity of the proposed determinization construction in the exponent.

\subsection{$L$-Game}
\begin{Def}[$L$-game \cite{CZ09}]\rm  
An \emph{$L$-game} for two players, Adam and Eva, is a tuple $\mathcal{G}=(V,V_E,V_A,p_I,\Sigma,\text{Move},L)$, where
\begin{itemize}
  \item $V$ is a set of \emph{positions} which is partitioned into the \emph{positions for Eva} $V_E$ and the \emph{positions for Adam} $V_A$,
  \item $p_I\in V$ is the \emph{initial position} of $\mathcal{G}$,
  \item $\Sigma$ is the \emph{labelling alphabet},
  \item $\text{Move}\subseteq V\times\Sigma\times V$ is the set of possible \emph{moves}, and
  \item $L\subseteq\Sigma^{\omega}$ is the \emph{winning condition}.
\end{itemize}
\end{Def}

A tuple $(p,\sigma,p')\in\text{Move}$ indicates that there is a move from $p$ to $p'$, which produces a letter $\sigma$. A \emph{play} is a maximal sequence $\pi=(p_0,\sigma_0,p_1,\sigma_1,p_2,\sigma_2,\ldots)$ such that $p_0=p_I$, and for each $i$, $(p_i,\sigma_i,p_{i+1})\in\text{Move}$. The player who belongs to the current position will choose the next move. Let $\pi_{\Sigma}=(\sigma_0,\sigma_1,\sigma_2,\ldots)$. If $\pi_{\Sigma}\in L$, Eva wins the play. Otherwise, Adam wins the play.

A \emph{strategy for the player X} is a function which tells the player what move he should choose depending on the finite history of moves played so far. A strategy is called a \emph{winning} strategy for Eva (resp. Adam), if Eva (resp. Adam) wins every play with this strategy. A \emph{strategy with memory m for Eva} is described as ($M$, update, choice, init), {in which $M$ is a set of \emph{memory} with the size being $m$}, update is a mapping from $M\times\text{Move}$ to $M$, choice is a mapping from $V_E\times M$ to Move, and $\text{init}\in M$. A player $X$ wins a game with memory $m$ if it has a winning strategy with memory $m$.

The following Lemma proved in \cite{CZ09} provides an argument for proving lower bounds on determinization problems.

\begin{Lem} \rm\label{proof-foundation}
If Eva wins an $L$-game, and requires memory $m$ for that, then every deterministic Rabin automaton for $L$ has states at least $m$ \cite{CZ09}.
\end{Lem}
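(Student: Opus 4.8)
The plan is to prove the contrapositive-flavoured bound directly: show that if $L$ is recognised by a deterministic Rabin automaton $\mathcal{D}$ with $N$ states, then Eva has a winning strategy in the $L$-game with memory $N$. Since, by hypothesis, Eva wins the $L$-game but $m$ is the least memory size for which she has a winning strategy, this gives $m\leq N$, i.e.\ $N\geq m$, which is exactly the assertion.

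First I would form the \emph{product game} $\mathcal{G}\times\mathcal{D}$. Write $\mathcal{D}=(\Sigma,P,\delta_{\mathcal D},p_0,\lambda_{\mathcal D})$ and, without loss of generality, assume $\mathcal{D}$ is complete (add a rejecting sink state if necessary), so that every word $w\in\Sigma^\omega$ induces a unique run of $\mathcal{D}$ and $w\in L$ iff that run satisfies the Rabin condition $\lambda_{\mathcal D}$. The positions of $\mathcal{G}\times\mathcal{D}$ are pairs $(v,p)\in V\times P$, owned by Eva when $v\in V_E$ and by Adam when $v\in V_A$; each move $(v,\sigma,v')\in\text{Move}$ lifts to the move $\big((v,p),\sigma,(v',\delta_{\mathcal D}(p,\sigma))\big)$; the initial position is $(p_I,p_0)$; and the winning condition is the Rabin condition $\lambda_{\mathcal D}$ read on the sequence of $P$-components. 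Because $\mathcal{D}$ is deterministic, a play of $\mathcal{G}\times\mathcal{D}$ projects to a play of $\mathcal{G}$ whose $P$-component is precisely the run of $\mathcal{D}$ on its letter sequence; hence it is won by Eva in $\mathcal{G}\times\mathcal{D}$ exactly when the projected play is won by Eva in $\mathcal{G}$. Consequently Eva's (arbitrary-memory) winning strategy in $\mathcal{G}$, which exists by assumption, lifts verbatim to a winning strategy in the product game.

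Next I would invoke the positional determinacy of Rabin games (the classical theorem of Emerson and Jutla): in any game with a Rabin winning condition, the player seeking to fulfil the condition has a memoryless winning strategy on her winning region. Applied to $\mathcal{G}\times\mathcal{D}$, this yields a winning strategy $f$ for Eva depending only on the current position $(v,p)$. Reading $f$ back as a strategy for the original $L$-game, Eva simply keeps the current state of $\mathcal{D}$ in memory: take $M=P$ (so $|M|=N$), $\mathrm{init}=p_0$, update rule $\mathrm{update}(p,(v,\sigma,v'))=\delta_{\mathcal D}(p,\sigma)$, and choice function assigning to an Eva position $v$ with memory $p$ the move prescribed by $f(v,p)$. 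Since the letter sequence of any resulting play lies in $L$ (its $\mathcal{D}$-run is accepting), this is a winning strategy for Eva with memory $N$, giving $m\leq N$.

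The main obstacle is the positional determinacy of Rabin games; once that is granted, the rest is routine bookkeeping (completing $\mathcal{D}$; observing that a finite play is losing for Eva and is therefore never reachable under one of her winning strategies; and checking that the lifted strategy and its read-back are well defined). I would also remark that the argument never uses whether the acceptance condition of the deterministic automaton is placed on states or on transitions: in either case a word determines a unique run and its membership in $L$ is decided by that run, so the product game and the Rabin condition on it are formed in the same way. This is precisely why the resulting lower bound is a bound on DR(T)A, not merely on DRA.
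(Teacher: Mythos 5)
Your proof is correct. The paper does not actually prove this lemma --- it imports it from \cite{CZ09} --- and your argument (form the product game $\mathcal{G}\times\mathcal{D}$, observe that Eva's objective there is a Rabin condition, invoke memoryless determinacy of Rabin games for the player with the Rabin objective, and read the positional strategy back as a strategy with memory $|P|=N$ in $\mathcal{G}$, giving $m\leq N$) is precisely the standard proof given in that reference, including the observation that nothing changes when the Rabin condition sits on transitions rather than states.
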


\subsection{Lower Bound State Complexity for NSA to DR(T)A} \label{LBNS2DR}
Inspired by the approach in \cite{CZ09}, in order to prove the lower bound state complexity for the determinization construction from NSA to DR(T)A, the essence is to define \emph{full Streett automata} and the relevant game.

For convenience, we first introduce some notations. For a tree $T$, every node $\tau\in T$ can be expressed by a sequence $se(\tau)=se(\tau)(0)se(\tau)(1)se(\tau)(2)\cdots$, where $se(\tau)(i)$ $(i\geq 0)$ is a positive integer. For the root $\tau_r$ of $T$, we have $se(\tau_r)=1$. As for any other node $\tau$, $se(\tau)=se(\tau_p)i$, where $\tau_p$ is the parent of $\tau$ and $i=1+|\{\tau'\in T\mid\tau'$ is the left sibling of $\tau\}|$. {For any two nodes $\tau$ and $\tau'$, we define $\tau<_{lex}\tau'$ if $se(\tau)$ is the proper prefix of $se(\tau')$; or there exists $i$ such that $se(\tau)(i)<se(\tau')(i)$ and for all $j<i$, $se(\tau)(j)=se(\tau')(j)$. Further, $\tau\leq_{lex}\tau'$ if $\tau<_{lex}\tau'$ or $se(\tau)=se(\tau')$.}

\begin{Def} [Full Streett Automata] \rm
A \emph{full Streett automaton} is a quintuple $(Q,\Sigma,Q_0,$ $\delta,\langle G,B\rangle_{[k]})$ where $Q$ is a finite set of states, $Q_0\subseteq Q$ is a set of initial states, $\Sigma=\mathcal{P}(Q\times\{\emptyset,G_1,\ldots,G_k,B_1,\ldots,B_k\}\times Q)$ is the alphabet, and the transition relation is defined by $\delta\subseteq Q\times\Sigma\times Q$. $\langle G,B \rangle_{[k]}$ are Streett pairs, where $k$ is a positive integer, and $G_i$ and $B_i$ are sets of transitions for $1\leq i\leq k$. For a Streett pair $\langle G_i,B_i \rangle$ and a letter $\sigma\in\Sigma$, a transition $\delta = (p,\sigma,q)\in G_i$ (or $B_i$) iff $(p,G_i \mbox{ (or }B_i),q)\in\sigma$.
\end{Def}

For the full Streett automaton with $n$ states $\mathcal{S}_n=(Q,\Sigma,Q,\delta,$ $\langle G,B\rangle_{[k]})$, where $Q$ is also the set of initial states, and $L(\mathcal{S}_n)=L_n$. A DRTA $\mathcal{RT}=(Q_{RT},\Sigma,Q_{RT0},\delta_{RT},\lambda_{RT})$ can be constructed via \emph{H-Safra trees for Streett determinization}.

We introduce some useful notations. For a set of states $S\subseteq Q$, let $\Sigma_{S}$ be the set of letters $\sigma\in\Sigma$ such that $\bigcup_{q \in S} \delta(q,\sigma) = S$. We also let $L_n^S=L_n\cap\Sigma_S^{\omega}$ and $Q^S_{RT}=\{H\in Q_{RT}:l(\tau_r)=S\mbox{ where }\tau_r\mbox{ is the root of }H\}$.
Thus, for all words $u\in\Sigma_S^*$ and all $H\in Q^S_{RT}$, we have $\delta_{RT}(H,u)\in Q^S_{RT}$.

Given a set of states $S\subseteq Q$, we define a $L_n^S$-$S$-game $\mathcal{G}^S$ such that Eva wins $\mathcal{G}^S$ but she cannot win with memory less than $|Q^S_{RT}|$. This indicates that any determinization Rabin automaton accepting $L_n^S$ has at least $|Q^S_{RT}|$ states.

\begin{Def}[$L_n^S$-$S$-game] \rm \label{Def:L-S-game}
The $L_n^S$-$S$-game is a tuple $\mathcal{G}^S=(V,V_E,V_A,p_I,\Sigma_S^+,\mbox{Move},L_n^S)$, where $V_E$ is a singleton set $\{p_E\}$ and $V_A$ consists of the initial position $p_I$ and one position $p_H$ for each H-Safra tree $H\in Q^S_{RT}$. The Move of $\mathcal{G}^S$ includes:

\begin{itemize}
  \item $(p_I,u,p_E)$, $u$ is a non-$\epsilon$ word in $\Sigma_S^+$.
  \item $(p_E,\epsilon,p_H)$, for each H-Safra tree $H$ in $Q^S_{RT}$.
  \item $(p_H,u,p_E)$, if there exists a node $\hat{\tau}$ in $\hat{H}=\delta_{RT}(H,u)$ that satisfies one of the three following conditions during the transformation from $H$ to $\hat{H}$:
      \begin{enumerate}
        \item $\hat{\tau}$ is accepting, and for all $\hat{\tau}'\leq_{lex}\hat{\tau}$ in $\hat{H}$, $\hat{\tau}'$ is not rejecting, $h(\hat{\tau}')=h(\tau')$ and $l(\hat{\tau}')=l(\tau')$, 

        \item $j(\hat{\tau})<j(\tau)$, and for all $\hat{\tau}'<_{lex}\hat{\tau}$ in $\hat{H}$, $\hat{\tau}'$ is not rejecting, $h(\hat{\tau}')=h(\tau')$, and $l(\hat{\tau}')=l(\tau')$, 

        \item $j(\hat{\tau})=j(\tau)$, $l(\hat{\tau})\supset l(\tau)$, and for all $\hat{\tau}'<_{lex}\hat{\tau}$ in $\hat{H}$, $\hat{\tau}'$ is not rejecting, $h(\hat{\tau}')=h(\tau')$ and $l(\hat{\tau}')=l(\tau')$, 
      \end{enumerate}
      for each H-Safra tree $H$ in $Q^S_{RT}$ and a word $u\in\Sigma_S^+$.
      Note that $\tau$ and $\tau'$ are the nodes in $H$ with $se(\tau)=se(\hat{\tau})$ and $se(\tau')=se(\hat{\tau}')$, respectively.
\end{itemize}
\end{Def}

The $L_n^S$-$S$-game has a flower shape, which is intuitively illustrated in Fig.~\ref{fig:L-S-game}. The central position is controlled by Eva and the petals belong to Adam. Moreover, each petal corresponds to a \emph{H-Safra tree}.

\begin{figure}[htp]
\centerline{\includegraphics[height=4.8cm]{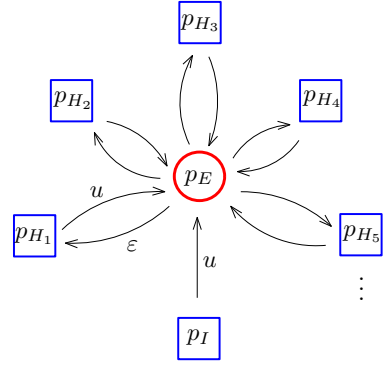}}\caption{The $L_n^S$-$S$-game $\mathcal{G}^S$} \label{fig:L-S-game}
\end{figure}

\begin{Lem} \rm \label{Eva-win-S}
Eva has a winning strategy in $\mathcal{G}^S$.
\end{Lem}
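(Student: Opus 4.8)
The plan is to exhibit an explicit winning strategy for Eva in $\mathcal{G}^S$ and then argue it is winning by relating plays of the game back to runs of the DRTA $\mathcal{RT}$ and, ultimately, to accepting runs of the full Streett automaton $\mathcal{S}_n$. The strategy is essentially forced by the flower shape: Eva sits at the unique central position $p_E$, and after Adam produces a finite word $u$ (from $p_I$ or from some petal $p_H$), Eva must choose which petal $p_H$ to move into next via a move $(p_E,\epsilon,p_H)$. The natural choice is to track the run of $\mathcal{RT}$: Eva maintains the current H-Safra tree reached by reading the word played so far from the initial tree $Q_{RT0}$, and always moves to the petal corresponding to that tree. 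Concretely, if the history of letters produced so far (concatenating all of Adam's words) is $w\in\Sigma_S^*$, Eva moves to $p_H$ where $H=\delta_{RT}(Q_{RT0},w)$ — well-defined and lying in $Q^S_{RT}$ because all letters are in $\Sigma_S$ and $S$ is preserved. This is a finite-memory strategy (memory $= Q^S_{RT}$), which will matter for Lemma~\ref{proof-foundation} later; for this lemma we only need that it wins.

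The key steps are then: (i) verify the strategy is legal, i.e.\ that whenever Eva is at a petal $p_H$ and Adam plays a word $u$, the move $(p_H,u,p_E)$ is actually available in Move — but this is automatic, since Adam may only play words $u$ for which some node $\hat{\tau}$ of $\hat{H}=\delta_{RT}(H,u)$ satisfies one of the three conditions in Definition~\ref{Def:L-S-game}, and these are precisely the edges present in the game, so Adam's legal moves out of $p_H$ are exactly those returning to $p_E$; (ii) show that along any play $\pi$ consistent with Eva's strategy, the sequence $H_0,H_1,H_2,\dots$ of H-Safra trees visited (at the moments Eva enters petals) is exactly an initial segment pattern of the run of $\mathcal{RT}$ on $\pi_\Sigma$, so that $\pi_\Sigma\in L(\mathcal{RT})=L_n=L(\mathcal{S}_n)$ would give Eva the win; (iii) the real content: show that if $\pi_\Sigma\notin L_n^S$, then from some point on Adam can no longer move — i.e.\ eventually no node $\hat\tau$ satisfying conditions 1–3 exists on any transition — contradicting maximality of the play, OR, dually, that whenever Adam can keep moving forever, the infinitely-often-triggered condition forces some node to be accepting infinitely often with a stabilized name, which by the correctness argument of Theorem~\ref{thm:RT} ($\Leftarrow$ direction, via the Safra-style analysis) yields $\pi_\Sigma\in L(\mathcal{RT})=L_n^S$ (using that $\pi_\Sigma\in\Sigma_S^\omega$). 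Thus every infinite play consistent with the strategy lies in $L_n^S$ and Eva wins.

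The main obstacle is step (iii): the argument that conditions 1–3 of Definition~\ref{Def:L-S-game}, when one of them is triggered infinitely often by Adam's moves, pin down a single node $\tau$ that is accepting infinitely often and whose name (equivalently, whose $se(\cdot)$-address, index label, and state label) eventually stabilizes. Conditions 2 and 3 are "progress" conditions (the parameter $j(\hat\tau)$ strictly decreases, or the state label strictly grows) that can only fire finitely often for a fixed node before that node must either become accepting or be rejected; the bookkeeping is that "rejecting" events are excluded on all lexicographically-smaller nodes $\hat\tau'$ by the side conditions $h(\hat\tau')=h(\tau')$, $l(\hat\tau')=l(\tau')$, so a left-to-right König-type argument isolates a leftmost node that stabilizes. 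This is the same phenomenon handled in the $\Leftarrow$ direction of Theorem~\ref{thm:RT} (the renaming complication), and I would import that reasoning. A secondary subtlety is making sure Adam always has \emph{some} legal move so that plays are genuinely infinite — but since the game is a flower and Eva's edges $(p_E,\epsilon,p_H)$ are always available, and Adam from a petal can always at least replay a word realizing one of the conditions (e.g.\ a word on which the root or some accepting node fires, which exists for any reachable tree over $\Sigma_S$ because $L_n^S\neq\emptyset$ by construction of the full automaton), the play does not get stuck. Once these are in place, Eva's strategy wins every play, proving Lemma~\ref{Eva-win-S}.
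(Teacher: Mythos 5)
Your proposal is correct and follows essentially the same route as the paper: Eva's strategy is to move to the petal indexed by $\delta_{RT}(Q_{RT0},w)$ for the word $w$ produced so far, and the win is established by taking the $\leq_{lex}$-minimal node witnessing conditions 1--3 infinitely often and observing that conditions 2 and 3 are progress conditions ($j$ bounded below by $0$, $l(\tau)$ bounded above by $l(\tau_p)$) that can fire only finitely often, so condition 1 (accepting) must fire infinitely often. Your extra worry about plays getting stuck is a reasonable side remark the paper does not address, but it does not change the argument.
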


\begin{proof}
There is a winning strategy for Eva: if a word $u$ was produced after a finite play and Eva is to make a move from $p_E$, then she chooses to go to a position indexed by $\delta_{RT}(H_0,u)$ where $H_0=Q_{RT0}$.

To see that Eva wins the $L_n^S$-$S$-game $\mathcal{G}^S$ with this strategy, we consider the run $\rho_{RT}$ of $\mathcal{RT}$ on the word defined by the play $(p_I,u_0,p_E)(p_E,\varepsilon,p_{H_1})(p_{H_1},u_1,p_E)(p_E,\varepsilon,p_{H_2})\cdots$, which refers to the word $u_0u_1u_2\cdots$.
Each segment $\rho_{RT}(H_i,u_i,H_{i+1})$ ($i\geq 1$) of the run $\rho_{RT}$ and a corresponding node $\tau_i\in H_{i+1}$ satisfie one of the conditions 1, 2 and 3 in Definition \ref{Def:L-S-game}. We denote $\tau=\leq_{lex}$-$\min\{\tau_i\mid\tau_i\mbox{ occurs infinitely often}\}$, then each $\tau'$, such that $\tau'\leq_{lex}\tau$, is not rejecting in each segment of $\rho_{RT}$. Obviously, if $\tau$ is infinitely often accepting, then Eva wins.

Assume that there is a position in $\rho_{RT}$ such that $\tau$ is not accepting, but the value of $j(\tau)$ becomes smaller infinitely often from the position onwards. However, this can only happen finitely often since $j(\tau)$ has the minimal value $0$, which is a contradiction.

 Also, assume that from some position in $\rho_{RT}$ onwards, $\tau$ is not accepting and the index label remains constant. Nevertheless, the state label $l(\tau)$ would grow monotonously and would infinitely often grow strictly. It can only happen finitely many times since $l(\tau)\subseteq l(\tau_p)$, which is a contradiction.

Therefore, Eva wins $\mathcal{G}^S$ with this strategy.
\end{proof}

Next, for each H-Safra tree $H\in Q^S_{RT}$, a game $\mathcal{G}_H^S$ is defined, which is a modification of $\mathcal{G}^S$ by removing the position $p_H$ of Adam and the corresponding moves. For this game, the following Lemma holds.


\begin{Lem} \rm \label{Adam-win-foundation-S}
For any two H-Safra trees $H\neq H'$ in $Q^S_{RT}$, there exists a word $u$ such that
$(p_{H'},u,p_E)$ is a move in $\mathcal{G}_H^S$, $\delta_{RT}(H',u)=\delta_{RT}(H,u)=H$, and
for any node $\tau$ in $H$, $\tau$ is not accepting.
\end{Lem}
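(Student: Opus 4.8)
The plan is to build $u$ by exploiting that, for a full Streett automaton, the alphabet $\Sigma=\mathcal{P}(Q\times\{\emptyset,G_1,\ldots,G_k,B_1,\ldots,B_k\}\times Q)$ lets a single letter of $\Sigma_S$ prescribe an arbitrary transition graph on $Q$ together with an arbitrary labelling of its edges by Streett indices; hence one letter can force the six-step H-Safra transition to perform essentially any structurally admissible move. Since $H\neq H'$ but both lie in $Q^S_{RT}$, fix the $\leq_{lex}$-least position $\pi^\star$ (in the $se$-coding introduced above) at which $H$ and $H'$ differ --- in whether a node sits there, in its state label, or in its index label; every node strictly $<_{lex}$-below $\pi^\star$ is then the same in $H$ and in $H'$. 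I aim for a word $u\in\Sigma_S^{+}$ whose letters are \emph{$H$-consistent}, i.e.\ each letter realises exactly the bundle of NSA-transitions and $G_i/B_i$-memberships that the structure of $H$ posits for the states currently living in the nodes. Read from $H$ such a letter is quiet --- it changes nothing, since $H$ is already saturated for Steps 2, 4, 6 and agrees with what the letter encodes --- so $H\xrightarrow{u}H$ without any accepting node, which already yields the last clause of the statement.

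Read from $H'$, by contrast, the common prefix below $\pi^\star$ stays untouched --- none of its nodes is split, merged away, or renamed, hence none is rejecting --- while at $\pi^\star$ the discrepancy is resolved towards $H$: if $H$ assigns that node a strictly smaller $j$-value, an $H$-consistent create-siblings step lowers $j$ there (condition~2 of Definition~\ref{Def:L-S-game}); if the same $j$-value but a strictly larger state label, an update/horizontal-merge step enlarges the label there (condition~3); if $H$ has a child of the $\pi^\star$-node that $H'$ lacks, one first provokes a vertical merge at $\pi^\star$ from $H'$ (condition~1, with the unchanged prefix below $\pi^\star$) and rebuilds; the remaining shapes of difference reduce to one of these after a bounded preparatory block. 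In every case the witness node lies at, or just above, $\pi^\star$ with all its $<_{lex}$-predecessors unchanged and non-rejecting, which is precisely what makes $(p_{H'},u,p_E)$ a legal move of $\mathcal{G}_H^S$. I would then append finitely many further $H$-consistent letters, each strictly repairing some remaining discrepancy between the current tree and $H$ without creating new ones, until the run from $H'$ has converged to $H$; from $H$ these letters are again quiet. Collecting this gives $\delta_{RT}(H',u)=\delta_{RT}(H,u)=H$, no accepting node along $H\xrightarrow{u}H$, and $(p_{H'},u,p_E)$ a move of $\mathcal{G}_H^S$, as required.

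I expect the main obstacle to be discharging these demands \emph{simultaneously} with one word: $u$ must be a totally silent self-loop at $H$, it must steer the genuinely different tree $H'$ all the way back to $H$, and along the way it must register exactly one of conditions 1--3 at the minimal differing position without ever making a node below that position rejecting. Making this precise requires a case analysis on the \emph{type} of the difference between $H$ and $H'$ ($j$-value, state label, or tree shape), an argument that in each case the change at $\pi^\star$ can be arranged in the ``growing'' direction so that some condition among 1--3 genuinely applies (and not merely a rejecting event, which would not authorise a move), a verification that the rebuilding letters never trigger a vertical merge --- so that the no-accepting-node clause survives --- and the confluence claim that the $H$-consistent letters drive every member of $Q^S_{RT}$ towards $H$ and never undo earlier progress. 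These are the points where the bulk of the work lies; the rest is bookkeeping against Definition~\ref{Def:L-S-game}.
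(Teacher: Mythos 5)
Your overall strategy is the same as the paper's: locate the $\leq_{lex}$-minimal position where $H$ and $H'$ disagree, do a case analysis on the type of disagreement, map each case to one of conditions 1)--3) of Definition~\ref{Def:L-S-game}, and drive both trees back to $H$ with a finite word. However, there are two concrete gaps. First, you require every letter of $u$ to be a \emph{silent self-loop} at $H$. The lemma does not need this --- it only forbids \emph{accepting} events along the run $H\xrightarrow{u}H$; the run may leave $H$ at intermediate steps and may make nodes of $H$ \emph{rejecting}. This extra demand is not just wasteful, it breaks down in the case you gloss over: when $j(\tau)=j(\tau')$ but $l(\tau')\setminus l(\tau)\neq\emptyset$, the discrepancy cannot be repaired ``in the growing direction'' (the $H'$-node's label must shrink, which is not any of conditions 2) or 3)). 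The paper's case (4) handles this by constructing a word that makes $\tau'$ \emph{accepting} at $H'$ (so condition 1) applies) and then kills the surplus states with a $B_{j(\tau)}$-letter; but making $\tau'$ accepting forces states of $l(\tau')$ through the $G$-sets, and those same states sit somewhere in $H$, so the corresponding letters cannot be quiet at $H$ --- the paper explicitly lets $\tau$ become \emph{rejecting} at $H$ there, which your silence requirement rules out.

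Second, your treatment of shape differences (``provoke a vertical merge at $\pi^\star$ from $H'$ \ldots and rebuild'') is underspecified and risky: a vertical merge is precisely an accepting event, and if the merging letters also act on the states as they are distributed in $H$, you may trigger an accepting node in the run from $H$, violating the last clause of the lemma. The paper avoids this by a preliminary normalisation letter $\sigma$ (its cases (i)/(ii)) that routes all states of the two differing nodes through $\emptyset$-labelled edges into the common parent region, equalising the labels at the minimal position and emptying everything $>_{lex}$ it in \emph{both} trees, after which only the four clean cases on $(j,l)$ remain. Your sentence ``the remaining shapes of difference reduce to one of these after a bounded preparatory block'' is exactly where this construction has to be supplied; without it, and without dropping the letter-wise silence requirement at $H$, the proof does not go through.
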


\begin{proof}
This lemma requires an analysis of the differences between the two H-Safra trees $H$ and $H'$. 
For the $\leq_{lex}$-minimal nodes $\tau$ in $H$ and $\tau'$ in $H'$ where $se(\tau)=se(\tau')$, but $l(\tau)\neq l(\tau')$ or $h(\tau)\neq h(\tau')$, a letter $\sigma$ is defined first which has the following two cases, denoted as $\sigma'$ and $\sigma''$, respectively.

(i) If $\tau$ and $\tau'$ are the left most child of their parents $\tau_p$ and $\tau'_p$, respectively, $\sigma'$ is produced such that $\{(s,\emptyset,s_p)\mid s\in l(\tau)\cup l(\tau')\mbox{ and } s_p\in l(\tau_p)\}\subseteq\sigma'$.

(ii) If $\tau$ and $\tau'$ have left siblings $\tau_l$ and $\tau'_l$, respectively, it is apparent that $l(\tau_l)=l(\tau'_l)$ and $h(\tau_l)=h(\tau'_l)$. Then we construct $\sigma''$ such that $\{(s,\emptyset,s_p)\mid s\in l(\tau)\cup l(\tau')\mbox{ and } s_p\in l(\tau_p)\backslash l(\tau_l)\}\subseteq\sigma''$.

For these two cases, after reading $\sigma$ at $H$ and $H'$, we have $l(\tau)=l(\tau')$. Every node $\hat{\tau}<_{lex}\tau$ in $H$ and $\hat{\tau'}<_{lex}\tau'$ in $H'$ remain unchanged. Meanwhile, for each node $\tau_m >_{lex}\tau$ in $H$ and $\tau'_m >_{lex} \tau'$ in $H'$, we have $l(\tau_m)=\emptyset$ and $l(\tau'_m)=\emptyset$.



Next, for two different nodes $\tau$ and $\tau'$, there are four cases to be considered:


(1) $j(\tau)>j(\tau')$. In the case that $\tau$ and $\tau'$ are the left most child of their parents $\tau_p$ and $\tau'_p$, respectively, let $w=\sigma_{j(\tau')}\sigma_{j(\tau')-1}\cdots\sigma_1$. Here, for each $1\leq k \leq j(\tau')$, $\sigma_k = \{(s,G_k,s)\mid s\in l(\tau')\}$. By reading $\sigma' w$, $H$ and $H'$ can reach $\hat{H}$ and $\hat{H'}$, respectively. The parent of $\tau'$ is accepting and $\tau$ stays unchanged.
In the case that $\tau$ and $\tau'$ have left siblings $\tau_l$ and $\tau'_l$, respectively, it has $l(\tau_l)=l(\tau'_l)$ and $h(\tau_l)=h(\tau'_l)$.
Let $s$ be a state in $l(\tau_l)$. We construct a word $w=\sigma_{j(\tau_l)}\sigma_{j(\tau_l)-1}\cdots\sigma_{j(\tau)+1}$. Here, for each $j(\tau)+1\leq k\leq j(\tau_l)$, $(s,G_k,s)\in\sigma_k$.
By reading $\sigma'' w$, a new node $\tau_s$ is created as the sibling of $\tau_l$ with $l(\tau_s)=\{s\}$, $h(\tau_s)=h(\tau)$, and $\tau'_s$ is created as the sibling of $\tau'_l$ with $l(\tau'_s)=\{s\}$, $h(\tau'_s)=h(\tau)$. Then $\tau_s$ and $\tau'_s$ are accepting in the next transformation.
Later, let $l(\tau)=l(\tau')=\emptyset$ and remove $\tau$ and $\tau'$, which makes $\tau_s$ renamed (rejected), and $\tau'_s$ not rejected. After the above operations, $\hat{H}$ and $\hat{H'}$ are obtained, respectively.

(2) $j(\tau)<j(\tau')$. Construct a word $w = \sigma_{j(\tau')} \sigma_{j(\tau')-1} \cdots$ $\sigma_{j(\tau)+1}$. Here, for each $j(\tau)+1 \leq k \leq j(\tau')$, it has $\sigma_k = \{(s,G_k,s)\mid s\in l(\tau')\}$. By reading $\sigma w$, $H$ and $H'$ can reach $\hat{H}$ and $\hat{H'}$, respectively.

(3) $j(\tau)=j(\tau')$ and $l(\tau)\supset l(\tau')$. After reading $\sigma$ at $H$ and $H'$, $\hat{H}$ and $\hat{H'}$ are obtained, respectively.

(4) $j(\tau)=j(\tau')$ and $l(\tau') \setminus l(\tau) \neq \emptyset$. We first construct a word $w$, which makes $\tau'$ being accepting after reading $w$ at $H'$.
Then construct a letter $\hat{\sigma}$ such that $(s,B_{j(\tau)},s)\in\hat{\sigma}$ for each state $s\in l(\tau)$. As a consequence, by reading $w\hat{\sigma}$ at $H$ and $H'$, $\tau$ becomes rejected and $\tau'$ is accepting. Furthermore, $\hat{H}$ and $\hat{H'}$ are obtained, respectively.

For the four cases, the {next transformation} makes both $\hat{H}$ and $\hat{H'}$ move to $H$.

Therefore, in the transformation from $H'$ to $H$, (1) and (4) satisfy condition 1) of Definition \ref{Def:L-S-game}. What is more, (2) and (3) satisfy condition 2) and 3), respectively. Meanwhile, there exists no accepting node during the transformation from $H$ to $H$.
\end{proof}

Further, by Lemma \ref{Adam-win-foundation-S},  the following lemma is obtained.

\begin{Lem} \rm \label{Adam-win-S}
For every H-Safra tree $H$ in $Q^S_{RT}$, Adam has a winning strategy in the correspongding $\mathcal{G}^{S}_{H}$.
\end{Lem}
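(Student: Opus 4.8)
The plan is to exhibit an explicit winning strategy for Adam in $\mathcal{G}^S_H$ and then verify it by tracking the run of $\mathcal{RT}$ on the word produced by the play. The strategy has two parts. First, from the initial position $p_I$, which Adam owns, Adam plays a word $u_0\in\Sigma_S^+$ with $\delta_{RT}(Q_{RT0},u_0)=H$; such a $u_0$ exists because $H$ lies in $Q^S_{RT}$ and, by the standing conventions, the states of $Q^S_{RT}$ are reachable while remaining inside $\Sigma_S$ (recall $\Sigma$ is a full alphabet, so already a single letter can force the root label to equal $S$). Second, whenever the play reaches a position $p_{H'}$ with $H'\in Q^S_{RT}$ — necessarily $H'\neq H$, since the position $p_H$ has been removed from $\mathcal{G}^S_H$ — Adam plays the word $u$ supplied by Lemma \ref{Adam-win-foundation-S} for the pair $H\neq H'$. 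By that lemma $(p_{H'},u,p_E)$ is a legal move, and $\delta_{RT}(H',u)=\delta_{RT}(H,u)=H$ with no node of $H$ turning accepting during the transformation $H\xrightarrow{u}H$.

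Next I would argue that this strategy is winning. Consider any play consistent with it, producing the word $\alpha=u_0u_1u_2\cdots\in\Sigma_S^{\omega}$, where $u_1,u_2,\ldots$ are the words Adam inserts from the petal positions. The run $\Pi$ of $\mathcal{RT}$ on $\alpha$ is $Q_{RT0}\xrightarrow{u_0}H\xrightarrow{u_1}H\xrightarrow{u_2}H\cdots$: after $u_0$ it sits at $H$, and each subsequent block $u_i$ satisfies $\delta_{RT}(H,u_i)=H$, so from that point on the run returns to $H$ after every block and, crucially, no transition taken after $u_0$ ever makes a node accepting. Here lies the key asymmetry exploited by deleting $p_H$: the moves $(p_{H'},u,p_E)$ are required to exhibit a ``progress'' node (condition 1, 2, or 3 of Definition \ref{Def:L-S-game}) in the transformation of the petal tree $H'$, but that progress occurs in Eva's bookkeeping tree $H'$, not in the actual run, which by construction rests at $H$ where nothing accepting happens.

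Finally I would convert ``no accepting transition occurs in a tail of $\Pi$'' into rejection. By the definition of $\lambda_{RT}$, for every name $I$ the set $A_I$ consists exactly of the transitions through which the node named $I$ is accepting; since no such transition occurs in $\Pi$ beyond the finite prefix over $u_0$, we get $\mathsf{Inf}(\Pi)\cap A_I=\emptyset$ for every Rabin pair $(A_I,R_I)$. Hence no Rabin pair is satisfied, so $\alpha\notin L(\mathcal{RT})$; since $\mathcal{RT}$ recognises $L(\mathcal{S}_n)=L_n$ and $\alpha\in\Sigma_S^{\omega}$, it follows that $\alpha\notin L_n^S$, i.e.\ $\pi_{\Sigma}\notin L_n^S$ and Adam wins this play. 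Because Adam has a uniform response to every petal Eva can choose (any $H'\neq H$, handled by Lemma \ref{Adam-win-foundation-S}) and always has a legal move available, every consistent play is infinite and lost by Eva, so the strategy is winning.

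I expect the main obstacle to be the first step: justifying cleanly that Adam can drive the run into $H$ by a legal opening move in $\Sigma_S^+$, together with the care needed to confirm that the ``progress'' forced on the petal side never leaks into an accepting transition on the run side. Once that is pinned down, the reduction of the transition-based Rabin condition to ``eventually no accepting node'', and hence to rejection, is routine given Lemma \ref{Adam-win-foundation-S}.
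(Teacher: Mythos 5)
Your proposal is correct and follows essentially the same route as the paper: Adam opens by steering the run of $\mathcal{RT}$ into $H$, Eva is forced to a petal $p_{H'}$ with $H'\neq H$ because $p_H$ has been deleted, and Adam repeatedly answers with the word from Lemma~\ref{Adam-win-foundation-S}, so the actual run loops on $H$ without any accepting transition and the produced word is rejected. You spell out the reduction to the Rabin condition and flag the reachability of $H$ via $\Sigma_S^+$ more explicitly than the paper does, but the argument is the same.
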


\begin{proof}
There is a winning strategy for Adam as follows. When he plays a word $u$ from $p_I$ such that $\delta_{RT}(H_0,u)$ where $H_0 = Q_{RT0}$, the best choice for Eva is to move to $p_H$ on the basis of the proof of Lemma \ref{Eva-win-S}. However, this position has been removed, she is forced to move to another position $p_{H'}$ ($H' \neq H$). Then Adam moves according to Lemma \ref{Adam-win-foundation-S}, and he can always answer to the proposal of Eva similarly in the play. Meanwhile, an infinite word $\alpha$ is produced. It is obvious that $\mathcal{RT}$ does not accept $\alpha$ because of Lemma \ref{Adam-win-foundation-S}. Therefore, Adam has a winning strategy in $\mathcal{G}^{S}_{H}$.
\end{proof}

Then, it is easy to infer the following lemma.

\begin{Lem} \rm \label{Eva-nowin-S}
Eva has no winning strategy with memory less than $|Q^S_{RT}|$ in $\mathcal{G}^{S}$.
\end{Lem}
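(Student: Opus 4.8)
The plan is to argue by contradiction, combining a counting (pigeonhole) argument on Eva's memory with the winning strategies for Adam supplied by Lemma~\ref{Adam-win-S}. Suppose, towards a contradiction, that Eva had a winning strategy $f=(M,\mathrm{update},\mathrm{choice},\mathrm{init})$ in $\mathcal{G}^{S}$ with $|M|=m<|Q^{S}_{RT}|$.

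First I would observe that Eva makes moves only at her single position $p_{E}$, and that from $p_{E}$ the legal moves are exactly the moves $(p_{E},\epsilon,p_{H})$ with $H\in Q^{S}_{RT}$. Hence the function $\mathrm{choice}$ restricted to $\{p_{E}\}\times M$ takes at most $m$ distinct values, so there are at most $m$ petals that $f$ can ever enter. Since $|Q^{S}_{RT}|>m$, some H-Safra tree $H^{\star}\in Q^{S}_{RT}$ is such that the move $(p_{E},\epsilon,p_{H^{\star}})$ is never chosen by $f$, no matter the memory content. Because every move of $\mathcal{G}^{S}$ that enters a petal originates at $p_{E}\in V_{E}$ (all of Adam's moves lead back to $p_{E}$), Adam cannot force a visit to $p_{H^{\star}}$; consequently, no play consistent with $f$ ever visits the position $p_{H^{\star}}$.

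Next I would use that $\mathcal{G}^{S}_{H^{\star}}$ is, by definition, $\mathcal{G}^{S}$ with the position $p_{H^{\star}}$ and all incident moves deleted. Therefore every $f$-consistent play of $\mathcal{G}^{S}$ is, letter for letter, a legal play of $\mathcal{G}^{S}_{H^{\star}}$ with the same labelling word, and so $f$ would be a winning strategy for Eva in $\mathcal{G}^{S}_{H^{\star}}$. But Lemma~\ref{Adam-win-S} gives Adam a winning strategy in $\mathcal{G}^{S}_{H^{\star}}$; pitting the two strategies against each other produces a single play that Eva both wins and loses, a contradiction. Hence Eva has no winning strategy with memory less than $|Q^{S}_{RT}|$. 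The only delicate points are the pigeonhole step --- a memory of size $m$ can realise at most $m$ distinct exits out of $p_{E}$, so at least one petal is provably untouched when $m<|Q^{S}_{RT}|$ --- and the simple but essential observation that the reduced game $\mathcal{G}^{S}_{H^{\star}}$ is faithfully embedded in $\mathcal{G}^{S}$ along every play that avoids $p_{H^{\star}}$; everything else is immediate from Lemma~\ref{Adam-win-S}.
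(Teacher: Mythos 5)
Your proof is correct and follows essentially the same route as the paper's: assume a winning strategy with memory smaller than $|Q^S_{RT}|$, conclude by pigeonhole that some petal $p_{H}$ is never visited, and derive a contradiction with Lemma~\ref{Adam-win-S}. The paper states this in three lines, leaving implicit the two points you spell out (that a memory of size $m$ yields at most $m$ distinct exits from $p_E$, and that a strategy avoiding $p_{H}$ transfers to the reduced game $\mathcal{G}^S_{H}$), so your version is just a more explicit rendering of the same argument.
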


\begin{proof}
For a contradiction, we suppose that Eva has a winning strategy with memory $|Q^S_{RT}|-1$. Then there would be a position $p_H$ which is never visited by this strategy. It is a contradiction with Lemma \ref{Adam-win-S}.
\end{proof}

Similar to the approach in \cite{CZ09}, the main theorem is ready to be proved.
\begin{Thm}
Every DR(T)A accepting $L(\mathcal{S}_n)$ has states at least $|Q_{RT}| = n^{5n}(n!)^n$ for $k=\omega(n)$ and $n^{5n}k^{nk}$ for $k=O(n)$.
\end{Thm}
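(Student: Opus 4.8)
The plan is to derive the theorem as a direct consequence of the machinery already assembled in Section~\ref{LBNS2DR}, via Lemma~\ref{proof-foundation} (the $L$-game to Rabin state-count correspondence), Lemma~\ref{Eva-win-S} (Eva wins $\mathcal{G}^S$) and Lemma~\ref{Eva-nowin-S} (Eva needs memory $\geq |Q^S_{RT}|$). First I would fix $S=Q$, the full state set of $\mathcal{S}_n$. Since every state of $\mathcal{S}_n$ is initial and we restrict attention to the sub-alphabet $\Sigma_Q$ of letters that preserve $Q$, the relevant winning condition is $L_n^Q = L(\mathcal{S}_n)\cap\Sigma_Q^\omega$ and the relevant set of deterministic states is $Q^Q_{RT}$, which by construction is exactly $Q_{RT}$ (the root label of every reachable H-Safra tree over $\Sigma_Q$ is $Q$). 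Applying Lemma~\ref{Eva-win-S} and Lemma~\ref{Eva-nowin-S} with $S=Q$ gives that Eva wins the $L_n^Q$-$Q$-game but requires memory at least $|Q_{RT}|$.

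Next I would invoke Lemma~\ref{proof-foundation}: any deterministic Rabin automaton recognizing $L_n^Q$ has at least $|Q_{RT}|$ states. Because $L_n^Q$ is obtained from $L(\mathcal{S}_n)$ merely by intersecting with a restricted alphabet — equivalently, $L_n^Q$ is the language of the full Streett automaton read over $\Sigma_Q$ — any DR(T)A for $L(\mathcal{S}_n)$ restricts to a DR(T)A for $L_n^Q$ with no more states, so the same lower bound $|Q_{RT}|$ transfers to every DR(T)A accepting $L(\mathcal{S}_n)$. The only subtlety to spell out here is that a lower bound for deterministic Rabin \emph{automata} (state-based acceptance) also applies to Rabin \emph{transition} automata: one converts a DRTA to an equivalent DRA by taking the product with the previous transition label, which does not decrease the state count — or, more cleanly, one notes that the $L$-game memory argument of Lemma~\ref{proof-foundation} is insensitive to whether acceptance is placed on states or transitions, since the memory lower bound counts positions of a strategy automaton regardless of where its acceptance lives.

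Then I would simply plug in the counting result from Theorem~\ref{thm:complexity}: $|Q_{RT}| = n^{5n}(n!)^n$ when $k=\omega(n)$ and $|Q_{RT}| = n^{5n}k^{nk}$ when $k=O(n)$, since $Q_{RT}$ is precisely the set of H-Safra trees for Streett determinization of $\mathcal{S}_n$, and $\mathcal{S}_n$ has $n$ states and $k$ Streett pairs. This yields the claimed bound and, matched against Theorem~\ref{thm:complexity}, establishes that the H-Safra-tree determinization is state-optimal.

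The main obstacle is not in this final gluing step but in the infrastructure it rests on, namely Lemma~\ref{Adam-win-foundation-S} and hence Lemma~\ref{Adam-win-S}: one must be sure that for \emph{every} pair of distinct reachable H-Safra trees $H\neq H'$ in $Q^Q_{RT}$ there genuinely is a distinguishing word driving both to a common $H$ while keeping $H'$'s run ``rejecting'' in the sense of the three move-conditions of Definition~\ref{Def:L-S-game} and keeping $H$'s self-loop free of accepting nodes. The delicate part is the case analysis on the $\leq_{lex}$-minimal node where $H$ and $H'$ differ (in state label or index label), where one has to exhibit explicit letters $\sigma'$, $\sigma''$ and follow-up words $w$, $\hat\sigma$ realizing the four sub-cases, and verify that the renaming behaviour in Step~5 indeed makes the relevant node in the $H'$-branch rejecting but not the one in the $H$-branch. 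If that lemma holds as stated, the theorem above follows immediately by the argument just sketched; no further calculation is needed beyond substituting the state count from Theorem~\ref{thm:complexity}.
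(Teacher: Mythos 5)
Your overall route---Eva's winning strategy in the $L$-game, the memory lower bound of Lemma~\ref{Eva-nowin-S}, the transfer via Lemma~\ref{proof-foundation}, and the final substitution of the count from Theorem~\ref{thm:complexity}---is the same as the paper's, and your side remarks about restricting to a sub-alphabet and about transition- versus state-based acceptance are reasonable. But there is a genuine gap at the point where you fix $S=Q$ and assert that $Q^Q_{RT}=Q_{RT}$. That identity is false. By definition $Q^S_{RT}$ is the set of reachable H-Safra trees whose \emph{root label} is $S$, and $Q_{RT}$ is the disjoint union of the sets $Q^S_{RT}$ over all $S\subseteq Q$: letters of $\Sigma$ outside $\Sigma_Q$ shrink the root label, so $Q_{RT}$ contains many reachable trees whose root label is a proper subset of $Q$. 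The single game $\mathcal{G}^Q$ therefore only certifies memory $|Q^Q_{RT}|$, which is strictly smaller than $|Q_{RT}|$ and does not reach the figure $n^{5n}(n!)^n$ of Theorem~\ref{thm:complexity}, whose $n^{4n}\cdot n^n$ factor counts trees over all root labels.

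What the argument actually requires, following \cite{CZ09}, is the combination over all $S\subseteq Q$: for each $S$ one plays $\mathcal{G}^S$ (reached after an initial word that drives the root label down to $S$) to force at least $|Q^S_{RT}|$ states of any DR(T)A for $L(\mathcal{S}_n)$ that are visited along $\Sigma_S$-confined continuations, and one must then show that these state sets are pairwise disjoint for distinct $S$, e.g.\ because deterministic states reached along $\Sigma_S$- and $\Sigma_{S'}$-confined prefixes with $S\neq S'$ have different residual languages. Only then do the contributions add up to $\sum_{S\subseteq Q}|Q^S_{RT}|=|Q_{RT}|$; this is exactly the shape of the analogous parity-case statement, which is phrased as a bound by $|\bigcup_{S\subseteq Q}Q^{Sh}_{PT}|$. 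Your observation that Lemma~\ref{Adam-win-foundation-S} is the delicate infrastructure is fair, but even granting that lemma in full, the $S=Q$-only version of your argument does not yield the claimed bound.
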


This theorem means that the proved lower bound state complexity for the determinization construction from NSA to DR(T)A exactly matches the state complexity of the proposed determinization construction by H-Safra trees.

\subsection{Lower Bound State Complexity for NSA to DP(T)A}
To prove the lower bound state complexity for determinization construction from NSA to DP(T)A, an appropriate $L$-game, for recognizing the complement language of the NSA, is constructed first.


For the full Streett automaton $\mathcal{S}_n = (Q,\Sigma,Q,\delta,\langle G,B \rangle_{[k]})$, a DPTA $\mathcal{PT} = (Q_{PT},\Sigma,Q_{PT0},\delta_{PT},\lambda_{PT})$ can be constructed via \emph{LIT-H-Safra trees}.
Let $L_n ^c$ be the complement of $L(\mathcal{S}_n)$, $\Sigma_S^\omega$ denote the infinite words over $\Sigma_S$, and $L_n^{cS}=L_n^c\cap\Sigma_S^\omega$. For any $S\subseteq Q$, let $Q^S_{PT} = \{LH\in Q_{PT}:l(\epsilon)=S \mbox{ where } \epsilon \mbox{ is the root of}$ $LH\}$ be the set of LIR-H-Safra trees in which state label of the root is $S$.
We choose a subset $Q^{Sh}_{PT}$ of $Q^S_{PT}$, which satisfies:
For any two LIR-H-Safra trees $LH,LH'\in Q^{Sh}_{PT}$ and any nodes $\tau$ in $LH$, $\tau'$ in $LH'$, if $se(\tau)=se(\tau')$, then $h(\tau)=h(\tau')$.


Given a set of states $S\subseteq Q$, we define a $L_n^{cS}$-$S$-game $\mathcal{G}^{cS}$ such that Eva wins $\mathcal{G}^{cS}$ but she cannot win with memory less than $|Q^{Sh}_{PT}|$.

\begin{Def}[$L_n^{cS}$-$S$-game] \rm \label{Def:Lc-S-game}
The $L_n^{cS}$-$S$-game is a tuple $\mathcal{G}^{cS}=(V,V_E,V_A,p_I,\Sigma_S^+,\mbox{Move},L_n^{cS})$, where $V_E$ is a singleton set $\{p_E\}$ and $V_A$ consists of the initial position $p_I$ and one position $p_{LH}$ for each LIR-H-Safra tree $LH\in Q^{Sh}_{PT}$. The Move of $\mathcal{G}^{cS}$ includes:
\begin{itemize}
  \item $(p_I,u,p_E)$, $u$ is a non-$\epsilon$ word in $\Sigma_S^+$.
  \item $(p_E,\epsilon,p_{LH})$, for each LIR-H-Safra tree $LH$ in $Q^{Sh}_{PT}$.
  \item $(p_{LH},u,p_E)$, if there exists a node $\tau$ in $LH$ with $p(\tau)=i$ and $\tau$ satisfies one of the two following conditions in the transition from $LH$ to $\hat{LH}=\delta_{PT}(LH,u)$:
      \begin{enumerate}
        \item $\tau$ is rejecting and the priority of the transition is $2i-1$, and for each $\tau'$ in $LH$ such that $p(\tau')<p(\tau)$, it requires that $l(\hat{\tau}')=l(\tau')$ and $h(\hat{\tau}')=h(\tau')$,

        \item $h(\hat{\tau})=h(\tau)$, $l(\hat{\tau})\subset l(\tau)$, and the priority of the transition is larger than $2i$, and for each $\tau'$ in $LH$ such that $p(\tau')<p(\tau)$, it requires that $l(\hat{\tau}')=l(\tau')$ and $h(\hat{\tau}')=h(\tau')$,
      \end{enumerate}
      for each LIR-H-Safra tree $LH$ in $Q^{Sh}_{PT}$ and a word $u\in\Sigma_S^+$.
      Note that $\hat{\tau}$ and $\hat{\tau}'$ are nodes in $\hat{LH}$ with $p(\hat{\tau})=p(\tau)$ and $p(\hat{\tau}')=p(\tau')$, respectively.
\end{itemize}
\end{Def}

\begin{Lem} \rm \label{Eva-win-S2P}
Eva has a winning strategy in $\mathcal{G}^{cS}$.
\end{Lem}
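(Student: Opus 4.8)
The plan is to mirror the structure of the NSA-to-DR(T)A lower bound argument (Lemmas~\ref{Eva-win-S}--\ref{Eva-nowin-S}), now at the level of LIR-H-Safra trees and the parity acceptance condition. For the present lemma specifically, I would exhibit an explicit winning strategy for Eva in $\mathcal{G}^{cS}$ and then verify it wins. The strategy is the obvious deterministic-simulation one: if a finite play has produced the word $u\in\Sigma_S^+$ and Eva must move from $p_E$, she moves to the position $p_{LH}$ indexed by $LH=\delta_{PT}(LH_0,u)$, where $LH_0=Q_{PT0}$. Note this is well-defined because $Q^{Sh}_{PT}$ is closed under $\Sigma_S$-successors in the relevant sense: all words considered stay in $\Sigma_S^*$, so the state label of the root stays $S$, and the moves $(p_{LH},u,p_E)$ are exactly designed so that $\delta_{PT}(LH,u)$ again lies in $Q^{Sh}_{PT}$ (the side conditions on the lower-position nodes force $h$ to agree). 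I would make this closure point explicit as a preliminary observation.

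Next I would argue correctness: the infinite word $\alpha=u_0u_1u_2\cdots$ built along any play consistent with this strategy must lie in $L_n^{cS}$, i.e. it must be \emph{rejected} by $\mathcal{S}_n$, equivalently the DPTA run on $\alpha$ must be accepting for $\mathcal{PT}$ as a recognizer of $L_n^c$. Here I follow the template of the proof of Lemma~\ref{Eva-win-S}, but dualized. Each segment $\delta_{PT}(LH_i,u_i,LH_{i+1})$ together with the witnessing node (at some LIR-position $i$) satisfies condition~1 or~2 of Definition~\ref{Def:Lc-S-game}; these conditions guarantee, respectively, a transition priority exactly $2i-1$ (odd, witnessing a rejecting node) or a priority strictly greater than $2i$ while $h$ is frozen and $l$ strictly shrinks. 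Let $p$ be the minimal LIR-position that occurs infinitely often among the witnessing nodes. By the side conditions, every node $\tau'$ with $p(\tau')\le p$ has $l(\tau')$ and $h(\tau')$ eventually constant, so no transition of priority $<2p$ occurs infinitely often, and position $p$ contributes either infinitely many priority-$(2p-1)$ transitions (case~1 infinitely often) or, if from some point on only case~2 applies at $p$, then $l$ at position $p$ would strictly decrease infinitely often while contained in the parent's label — impossible. So case~1 fires infinitely often at $p$, the minimal priority seen infinitely often is the odd number $2p-1$, hence by the parity acceptance condition of $\mathcal{PT}$ (recall $\mathcal{PT}$ recognizes $L_n^c$, so odd-min means acceptance of the complement), $\alpha\in L_n^{cS}$, and Eva wins.

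I would also dispatch the two ``degenerate'' scenarios exactly as in Lemma~\ref{Eva-win-S}: (a) from some point on the witnessing node at position $p$ is never in case~1 but $l$ at $p$ keeps shrinking strictly infinitely often — ruled out by $l(\tau)\subseteq l(\tau_p)$ being finite and monotone; (b) from some point on $l$ and $h$ at position $p$ are constant and no case applies at $p$ — contradicting minimality of $p$ among infinitely-occurring witnesses. These are the same bounded-descent arguments used earlier, just transported to the LIR setting.

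The main obstacle I anticipate is not the combinatorial descent argument but the bookkeeping that ties the abstract ``priority of the transition'' in Definition~\ref{Def:Lc-S-game} to the actual signature $\sig=(st,p)$ and the parity classes $\lambda_{2i},\lambda_{2i-1},\lambda_{2n(\mu+1)+1}$ defined in Section~\ref{sec:NSA2DPTA}: one must check that a rejecting node at LIR-position $i$ really does force $\sig_{\mathrm{rej}}$-minimal position $\le i$ and hence transition priority $\le 2i-1$, with equality exactly when all lower positions are stable (which the side conditions enforce), and that a case~2 step genuinely yields priority $>2i$. In other words, the delicate part is showing that Eva's deterministic-simulation strategy actually \emph{can} always respond — that from $p_{LH}$ there is always a legal move $(p_{LH},u,p_E)$ with $\delta_{PT}(LH,u)$ back in $Q^{Sh}_{PT}$ — and that the priorities land where the definition claims; this is where the structure of $Q^{Sh}_{PT}$ (uniform index labels across trees) and the six-step LIR construction must be invoked carefully. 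The rest is a routine dualization of the $\mathcal{G}^S$ proof.
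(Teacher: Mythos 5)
Your proposal matches the paper's proof essentially step for step: the same deterministic-simulation strategy (Eva moves to $p_{LH}$ with $LH=\delta_{PT}(LH_0,u)$), the same analysis via the minimal LIR-position $i_{\min}$ occurring infinitely often among the witnessing nodes, and the same bounded-descent argument showing condition~2 cannot fire forever at that position, so condition~1 fires infinitely often and the minimal priority seen infinitely often is the odd value $2i_{\min}-1$. Two cosmetic quibbles: the bound should read ``no priority smaller than $2p-1$ occurs infinitely often'' (your ``$<2p$'' would exclude the very odd priority you then need to occur infinitely often), and $\mathcal{PT}$ as constructed recognizes $L_n$, so odd-minimal priority means rejection by $\mathcal{PT}$ and hence membership in $L_n^{c}$ --- the conclusion you draw is correct, but the parenthetical describing $\mathcal{PT}$ as a recognizer of $L_n^c$ is misstated.
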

\begin{proof}
There is a winning strategy for Eva: if a word $u$ was produced after a finite play and Eva is to make a move from $p_E$, then she chooses to go to a position indexed by $\delta_{PT}(LH_0,u)$ where $LH_0=Q_{PT0}$.

To see that Eva wins the $L_n^{cS}$-$S$-game $\mathcal{G}^{cS}$ with this strategy, we consider the run $\rho_{PT}$ of $\mathcal{PT}$ on the word defined by the play $(p_I,u_0,p_E)(p_E,\epsilon,p_{LH_1})(p_{LH_1},u_1,p_E)(p_E,\epsilon,p_{LH_2})\cdots$,\\which refers to the word $u_0u_1u_2\cdots$. Each segment $\rho_{PT}(LH_k,$ $u_k,LH_{k+1})$ $(k\geq1)$ of the run $\rho_{PT}$ satisfies one of the conditions 1), 2) and 3) in Definition \ref{Def:Lc-S-game}, and there exists a node $\tau_k\in LH_k$ with $p(\tau_k)=i_k$. Let $i_{\min}$ be the minimal one that occurs infinitely often among these $i_k$ and $\tau_{\min}$ be the node on position $i_{\min}$ in the LIR. Hence, no priority smaller than $2i_{\min}-1$ can occur infinitely often in $\rho_{PT}$.
It is obvious that if $\tau_{\min}$ is infinitely often rejecting, then the minimal priority occurring infinitely often is $2i_{\min}-1$ in $\rho_{PT}$, and Eva wins.

Next, we assume that there is a position $po$ of $\rho_{PT}$ such that $\tau_{\min}$ is not rejecting, but satisfies condition 2) in Definition \ref{Def:Lc-S-game} infinitely often from the position $po$ onwards. Consequently, the state label of $\tau_{\min}$ would reduce monotonously from the position $po$ onwards, and would infinitely often reduce strictly. It is a contradiction.



Therefore, Eva wins $\mathcal{G}^{cS}$ with this strategy.
\end{proof}

Similar to the lower bound state complexity from NSA to DR(T)A in Section \ref{LBNS2DR}, for each LIR-H-Safra tree $LH\in Q^{Sh}_{PT}$, a game $\mathcal{G}_{LH}^{cS}$ can be defined by removing the corresponding position $p_{LH}$ and the relevant moves from $\mathcal{G}^{cS}$.
The following lemma shows that Adam has a winning strategy in $\mathcal{G}_{LH}^{cS}$.

\begin{Lem} \rm \label{Adam-win-foundation-S2P}
For any two LIR-H-Safra trees $LH\neq {LH}'$ in $Q^{Sh}_{PT}$, there exists a word $u$ such that $(p_{LH'},u,p_E)$ is a move in $\mathcal{G}_{LH}^{cS}$, $\delta_{PT}(LH',u)=\delta_{PT}(LH,u)=LH$, and the minimal priority in the transitions from $LH$ to $LH$ after reading $u$ is even.
\end{Lem}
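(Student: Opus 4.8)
The plan is to mimic the structure of Lemma~\ref{Adam-win-foundation-S}, but now tracking positions $p(\tau)$ in the LIR rather than lexicographic order, and producing transitions whose priorities realize the intended odd/even pattern of the parity condition. First I would locate the ``point of difference'' between $LH$ and $LH'$: among all nodes with $p(\tau)=p(\tau')$, take the one with minimal position such that $l(\tau)\neq l(\tau')$ or (by the defining property of $Q^{Sh}_{PT}$, $h(\tau)=h(\tau')$, so the difference is purely in the state labels) $l(\tau)\neq l(\tau')$. By the construction of $Q^{Sh}_{PT}$ all nodes at positions below this one agree in both labels, so I can aim to keep them untouched throughout the play, which is exactly what conditions 1) and 2) of Definition~\ref{Def:Lc-S-game} demand of the witness move.

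Next I would, as in the proof of Lemma~\ref{Adam-win-foundation-S}, first read a letter $\sigma$ (of type $\sigma'$ or $\sigma''$ depending on whether the differing node is a left-most child or has a left sibling) that equalizes the state labels of $\tau$ and $\tau'$ and empties every node at a larger position, while leaving the earlier nodes fixed. Then, splitting into the cases $j(\tau)>j(\tau')$, $j(\tau)<j(\tau')$, $j(\tau)=j(\tau')$ with $l(\tau)\supset l(\tau')$, and $j(\tau)=j(\tau')$ with $l(\tau')\setminus l(\tau)\neq\emptyset$, I would append a word built from letters of the form $(s,G_k,s)$ and $(s,B_{j},s)$ that forces exactly one node --- the node $\hat\tau$ at position $i=p(\tau)$ or a node at a position derived from it --- to become rejecting (case 1) of Definition~\ref{Def:Lc-S-game}) or to strictly shrink its state label while keeping $h$ fixed (case 2)), so that the signature of that transition is $(\rej,i)$ or has priority $>2i$, i.e. the priority along the whole segment from $LH$ to $LH$ is at least $2i-1$ but the minimal one actually attained is even. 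The key point is that, because all nodes below position $i$ are kept label-stable by construction, they contribute no priority $<2i-1$, and the final ``reset'' transition bringing both $\hat H$ and $\hat{H'}$ back to $LH$ is an accepting-type step that contributes the even priority witnessing the claim; one then checks $\delta_{PT}(LH',u)=\delta_{PT}(LH,u)=LH$, exactly as the ``next transformation'' argument at the end of Lemma~\ref{Adam-win-foundation-S}.

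I expect the main obstacle to be the bookkeeping of priorities: unlike the Rabin case, where it sufficed to say ``no node $\leq_{lex}\tau$ is rejecting'', here I must verify that the numerical priority $\sig=(st,p)$ of each intermediate transition is genuinely $\geq 2i-1$ and that some transition in the segment attains an even value no larger than the minimum along the loop. This requires arguing that the minimal-position accepting-or-rejecting node in each step is never at a position $<i$ --- which follows from label-stability of those nodes, but must be checked against each of the six construction steps (in particular that Steps~3 and~5, the rejecting steps, and Step~4, the accepting step, never touch positions below $i$) --- and that the designated resetting transition is accepting at position $\le i$. Once these invariants are in place the four-case analysis is essentially the same mechanical letter-construction as in Lemma~\ref{Adam-win-foundation-S}, with $G_k$- and $B_j$-letters chosen to drive the intended node through ``created, then accepting, then (if needed) renamed/rejected'' so that the loop's minimal priority comes out even.
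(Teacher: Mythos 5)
Your high-level plan --- dualize the Rabin-case argument, locate the first point of difference, equalize it with a carefully built word, and close the loop with a transition whose minimal priority is even --- is the right architecture, and your worry about priority bookkeeping is exactly where the real work lies. But there are two concrete gaps. First, you invoke the defining property of $Q^{Sh}_{PT}$ to conclude that nodes with $p(\tau)=p(\tau')$ have $h(\tau)=h(\tau')$, so that ``the difference is purely in the state labels.'' That property is keyed to the tree address $se(\tau)$, not to the LIR position $p(\tau)$: two nodes occupying the same position $i$ in the two LIRs may sit at different places in their trees, with different parents and different $j$-values. The paper's proof therefore splits on $se(\tau)=se(\tau')$ versus $se(\tau)\neq se(\tau')$ and handles the latter as a separate case (driving $\tau'$ to be rejecting with priority $2i-1$); your proposal silently drops this case.

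Second, your four-way case split on $j(\tau)$ versus $j(\tau')$ is imported from Lemma \ref{Adam-win-foundation-S}, but the move conditions of $\mathcal{G}^{cS}$ (Definition \ref{Def:Lc-S-game}) are different from those of $\mathcal{G}^{S}$: they recognize only ``$\tau$ rejecting with priority $2i-1$'' and ``$h$ fixed, $l$ strictly shrinks, priority $>2i$,'' with no analogue of the ``$j$ decreases'' condition. So the subcases $j(\tau)>j(\tau')$ and $j(\tau)<j(\tau')$ do not by themselves produce legal moves in $\mathcal{G}^{cS}_{LH}$. The missing device is the one the paper introduces up front: a set $W$ of words each of which makes the distinguished node $\tau$ of $LH$ \emph{accepting with priority exactly $2i$} when read at $LH$; one then appends $B_{j(\tau')}$-letters (or chooses $w\in W$ appropriately) so that the \emph{same} word makes $\tau'$ in $LH'$ rejecting with priority $2i-1$, or strictly shrinks $l(\tau')$ with priority $>2i$. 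This is what simultaneously certifies the move condition from $p_{LH'}$ and forces the minimal priority on the $LH$-to-$LH$ loop to be the even value $2i$. Without $W$ and without the $se$-based case split, the argument as you describe it does not go through.
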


\begin{proof}
We first identify the position-minimal nodes $\tau$ in $LH$ and $\tau'$ in $LH'$ such that $p(\tau)=p(\tau')=i$, and $se(\tau)\neq se(\tau')$ or $l(\tau)\neq l(\tau')$.
We use $W$ to denote a set of words such that for each $w \in W$, $\tau$ is accepting and after reading $w$ at $LH$, the priority is $2i$. Then two cases are considered:

(1) $se(\tau)=se(\tau')$. It has $h(\tau)=h(\tau')$. The only difference between $\tau$ and $\tau'$ is the state labels.
In the case that $l(\tau) \setminus l(\tau') \neq \emptyset$, a word $w_1 \in W$ is read at $LH$ and $LH'$. Let $\sigma$ be a letter such that $(s,B_{j(\tau')},s)\in\sigma$, where $s\in l(\tau')$. By reading $w_1 \sigma$, $LH$ and $LH'$ can reach $\hat{LH}$ and $\hat{LH'}$, respectively. In the transformation from $LH$ to $\hat{LH}$, $\tau$ is accepting and the priority is $2i$. Meanwhile, $\tau'$ is rejecting and the priority is $2i-1$ in the transformation from $LH'$ to $\hat{LH'}$. The next transformation makes both $\hat{LH}$ and $\hat{LH'}$ move to $LH$.
In the case that $l(\tau)\subset l(\tau')$, let $w_2$ be a word in $W$ such that $\tau'$ is not accepting or rejecting after reading $w_2$ at $LH'$. As a result, $\hat{LH}$ and $\hat{LH'}$ are obtained. The next transformation makes both $\hat{LH}$ and $\hat{LH'}$ move to $LH$ and the priority is larger than $2i$ in the transformation from $\hat{LH'}$ to $LH$.

(2) $se(\tau)\neq se(\tau')$. Let $w_3$ be a word in $W$ such that $\tau'$ is rejecting and after reading $w_3$ at $LH'$, the priority is $2i-1$. Then, $\hat{LH}$ and $\hat{LH'}$ are obtained, respectively. The next transformation makes both $LH$ and $LH'$ move to $LH$.

As a result, in the transformation from $LH'$ to $LH$, the first case of (1) and (2) satisfy condition 1) of Definition \ref{Def:Lc-S-game}. The second case of (1) satisfies condition 2). Meanwhile, the minimal priority is $2i$ in the transformation from $LH$ to $LH$.
\end{proof}





Thus, Eva has no winning strategy with memory less than $|Q^{Sh}_{PT}|$ in $\mathcal{G}^{cS}$. Based on the approach in \cite{CZ09} and Lemma \ref{proof-foundation},
we can obtain the following result.

\begin{Lem} \rm
Every DR(T)A that recognises the complement of $L(\mathcal{S}_n)$ must contain at least $|\bigcup\limits_{S\subseteq Q} Q^{Sh}_{PT}|$ states.
\end{Lem}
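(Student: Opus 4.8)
The plan is to obtain the bound by gluing together, over all subsets $S\subseteq Q$, the per-component lower bounds coming from the games $\mathcal{G}^{cS}$, exactly in the spirit of the DR(T)A argument of Section~\ref{LBNS2DR}. For a fixed $S$, Lemma~\ref{Eva-win-S2P} shows that Eva wins $\mathcal{G}^{cS}$, while the discussion following Lemma~\ref{Adam-win-foundation-S2P} shows she needs memory at least $|Q^{Sh}_{PT}|$ to do so; by Lemma~\ref{proof-foundation} (applied, as the opening remark of this section licenses, to transition automata as well) every DR(T)A recognising $L_n^{cS}$ therefore has at least $|Q^{Sh}_{PT}|$ states. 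So the whole work is to turn these $2^n$ separate bounds into a single additive bound on an arbitrary DR(T)A $\mathcal{D}$ for $L_n^c$, and the key will be that the state sets $\mathcal{D}$ uses ``for different $S$'' are pairwise disjoint.

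Concretely, I would fix for each $S\subseteq Q$ the ``reset'' letter $d_S=\{(p,\emptyset,s)\mid p\in Q,\ s\in S\}$, so that $\delta_{\mathcal{S}_n}^{*}(X,d_S)=S$ for every nonempty $X\subseteq Q$ and hence $d_S\in\Sigma_S$, and put $\mathcal{R}_S=\{\delta^{*}_{\mathcal{D}}(\mathit{init}_{\mathcal{D}},\,d_S v)\mid v\in\Sigma_S^{*}\}$, the set of states of $\mathcal{D}$ reachable by a word that drives $\mathcal{S}_n$ into live set exactly $S$. Restricting $\mathcal{D}$ to the sub-alphabet $\Sigma_S$, to the initial state $\delta^{*}_{\mathcal{D}}(\mathit{init}_{\mathcal{D}},d_S)$ and to the states $\mathcal{R}_S$, with the inherited acceptance condition, yields a DR(T)A whose language is $\{v\in\Sigma_S^{\omega}\mid d_Sv\in L_n^c\}$; using the defining property $\bigcup_{q\in S}\delta(q,\sigma)=S$ of $\Sigma_S$ together with the fact that finite prefixes do not affect Streett/Rabin/parity acceptance, this language equals $L_n^{cS}$, so by the previous paragraph $|\mathcal{R}_S|\ge|Q^{Sh}_{PT}|$.

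The main step is \emph{disjointness}: $\mathcal{R}_S\cap\mathcal{R}_{S'}=\emptyset$ whenever $S\neq S'$. Suppose some state $q$ lies in both, witnessed by words $u,u'$ with $\delta^{*}_{\mathcal{D}}(\mathit{init}_{\mathcal{D}},u)=\delta^{*}_{\mathcal{D}}(\mathit{init}_{\mathcal{D}},u')=q$, $\delta^{*}_{\mathcal{S}_n}(Q,u)=S$, and $\delta^{*}_{\mathcal{S}_n}(Q,u')=S'$. Choose $r$ in the symmetric difference, say $r\in S\setminus S'$, and set $w=\{(r,\emptyset,r)\}^{\omega}$. Reading $w$, the only run of $\mathcal{S}_n$ that never dies is the self-loop at $r$, and its repeated transition lies in no $B_i$, so $\mathcal{S}_n$ has an accepting run on $w$ from $p$ iff $p=r$; hence $uw\in L_n$ (through $r\in S$) but $u'w\notin L_n$ (no live state after $u'$ is $r$), i.e. $uw\notin L(\mathcal{D})$ while $u'w\in L(\mathcal{D})$. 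Since $\mathcal{D}$ is deterministic, its runs on $uw$ and on $u'w$ share the same infinite suffix, namely the run on $w$ started from $q$, so $\mathcal{D}$ accepts $uw$ iff it accepts $u'w$ — a contradiction. This is the hard part: it is precisely where the \emph{full} Streett automaton is essential (the alphabet must contain the isolating letter $\{(r,\emptyset,r)\}$), and where one must check that the separating suffix is simultaneously relevant to $\mathcal{S}_n$ and invisible to $\mathcal{D}$; a secondary technicality is the clean identification of ``reachable by a $\Sigma_S$-word'' with ``$\mathcal{S}_n$ in live set $S$'' and the resulting equality $L(\mathcal{D}_S)=L_n^{cS}$.

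Finally I would assemble the pieces: since $Q^S_{PT}\cap Q^{S'}_{PT}=\emptyset$ for $S\neq S'$ (their roots carry different state labels), and $Q^{Sh}_{PT}\subseteq Q^S_{PT}$, the union $\bigcup_{S\subseteq Q}Q^{Sh}_{PT}$ is disjoint, so $|\bigcup_{S\subseteq Q}Q^{Sh}_{PT}|=\sum_{S\subseteq Q}|Q^{Sh}_{PT}|$. Combining this with the disjointness of the $\mathcal{R}_S$ and the bounds $|\mathcal{R}_S|\ge|Q^{Sh}_{PT}|$ gives $|Q_{\mathcal{D}}|\ge|\bigcup_{S\subseteq Q}\mathcal{R}_S|=\sum_{S\subseteq Q}|\mathcal{R}_S|\ge\sum_{S\subseteq Q}|Q^{Sh}_{PT}|=|\bigcup_{S\subseteq Q}Q^{Sh}_{PT}|$, which is the claimed bound on any DR(T)A recognising the complement of $L(\mathcal{S}_n)$.
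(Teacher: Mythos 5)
Your proposal is correct and follows exactly the route the paper intends: the paper states this lemma without a written proof, merely invoking ``the approach in \cite{CZ09} and Lemma~\ref{proof-foundation}'', and your argument --- per-$S$ memory lower bounds extracted from the games $\mathcal{G}^{cS}$ via Lemma~\ref{proof-foundation}, glued together by showing that the sets of states of $\mathcal{D}$ reachable after words driving $\mathcal{S}_n$ into live set $S$ are pairwise disjoint, using the reset letters $d_S$ and the isolating suffix $\{(r,\emptyset,r)\}^{\omega}$ --- is precisely the content that citation stands for. The one point to watch is the identification $L(\mathcal{D}_S)=L_n^{cS}$: with the paper's literal definition $L_n^{cS}=L_n^{c}\cap\Sigma_S^{\omega}$ (acceptance from initial set $Q$) this can fail, because a word of $\Sigma_S^{\omega}$ may be accepted by $\mathcal{S}_n$ only along runs that never enter $S$ (e.g.\ $\sigma=\{(1,B_1,1),(2,\emptyset,2)\}$ with $S=\{1\}$); the language your restricted automaton actually recognises, and the one the games with all positions rooted at $S$ really concern, is the complement of the language accepted \emph{from} $S$, and with that (standard) reading of $L_n^{cS}$ your proof goes through --- this imprecision is inherited from the paper's definitions, not introduced by your argument.
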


In \cite{Sven09}, there is a result that the size of the smallest Rabin automaton that recognises the complement of $L(\mathcal{S}_n)$ is equal to the one of the smallest Streett automaton that recognises $L(\mathcal{S}_n)$. Since parity automata are special Streett automata, the main theorem is inferred.




\begin{Thm}
Every DS(T)A or DP(T)A accepting $L(\mathcal{S}_n)$ must have states at least $|\bigcup\limits_{S\subseteq Q} Q^{Sh}_{PT}| = 2^{\Omega(n^2 \log n)}$ for $k=\omega(n)$ and $2^{\Omega(nk \log nk)}$ for $k=O(n)$.
\end{Thm}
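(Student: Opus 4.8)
The plan is to assemble the theorem from three facts already in hand and then supply the one missing asymptotic estimate. The preceding lemma states that every DR(T)A recognising the complement $L_n^c$ of $L(\mathcal{S}_n)$ needs at least $|\bigcup_{S\subseteq Q}Q^{Sh}_{PT}|$ states. The duality of Schewe~\cite{Sven09} cited just before the theorem identifies the size of the smallest Rabin automaton for $L_n^c$ with that of the smallest Streett automaton for $L(\mathcal{S}_n)$, so every DS(T)A for $L(\mathcal{S}_n)$ inherits the same bound; and since every parity condition is a Streett condition, a DP(T)A is in particular a DS(T)A, so the bound also applies to DP(T)A. Hence the only thing left to prove is that $|\bigcup_{S\subseteq Q}Q^{Sh}_{PT}| = 2^{\Omega(n^2\log n)}$ for $k=\omega(n)$ and $2^{\Omega(nk\log nk)}$ for $k=O(n)$.

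For the counting I would fix one convenient full Streett automaton $\mathcal{S}_n$, choosing the first components $G_1,\ldots,G_k$ of its Streett pairs so that the $Mini$-driven index labelling assigns to every node an index label depending only on the node's tree position $se(\cdot)$ and not on which states occupy it --- this is exactly the closure property required of $Q^{Sh}_{PT}$. Taking $S=Q$ with this position-determined index labelling, I would lower-bound the number of $\delta_{PT}$-reachable LIR-H-Safra trees lying in $Q^{Sh}_{PT}$. As recalled in the proof of Theorem~\ref{thm:complexity-NS2PT}, a LIR-H-Safra tree has at most $m=n(\mu+1)$ nodes (where $\mu=\min(n,k)$) and there are $(m-1)!$ distinct LIR-shapes (ordered trees with a compatible generation order) on $m$ nodes. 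The heart of the argument is to show that $2^{\Omega(m\log m)}$ of these shapes are genuinely realised as reachable trees whose index labels follow the fixed, position-determined scheme: one exhibits, for each such target shape, an input word over $\Sigma_Q$ that sculpts the determinization into that configuration, reusing the letter gadgets constructed in the proofs of Lemma~\ref{Adam-win-foundation-S} and Lemma~\ref{Adam-win-foundation-S2P}, which already show how to drive the construction from one tree to another. Once this realizability claim holds, Stirling's estimate gives $(n(\mu+1)-1)! = 2^{\Theta(n^2\log n)}$ when $\mu=n$ (case $k=\omega(n)$) and $2^{\Theta(nk\log nk)}$ when $\mu=k$ (case $k=O(n)$); the union over the $2^n$ choices of $S$ contributes only a negligible factor, so $|\bigcup_{S\subseteq Q}Q^{Sh}_{PT}|$ meets the claimed bounds, which also match the upper bound of Theorem~\ref{thm:complexity-NS2PT} in the exponent.

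The main obstacle is precisely this realizability step. It is routine to count abstract LIR-shapes, but one must certify that $2^{\Omega(n^2\log n)}$ of them actually occur on runs of the determinization of a \emph{single} fixed $\mathcal{S}_n$ while simultaneously carrying a \emph{common} position-determined index labelling; this forces a delicate choice of the $G_i$'s --- rigid enough that $Mini$ cannot make the index label depend on the state contents, yet rich enough that enough distinct shapes remain reachable. A secondary point is the bookkeeping for $\bigcup_{S\subseteq Q}Q^{Sh}_{PT}$: disjointness across distinct root labels $S$, and checking that every realizing word stays inside $\Sigma_S^{*}$ so the root label is preserved. The remaining parts --- chaining the preceding lemma with Schewe's duality and the "parity is Streett" observation, plus the Stirling computation --- are straightforward.
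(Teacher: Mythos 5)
Your proposal follows essentially the same route as the paper: chain the preceding lemma (every DR(T)A for the complement of $L(\mathcal{S}_n)$ needs $|\bigcup_{S\subseteq Q} Q^{Sh}_{PT}|$ states) with Schewe's Rabin/Streett duality and the observation that parity is a special Streett condition, then estimate $|\bigcup_{S\subseteq Q} Q^{Sh}_{PT}|$ by counting LIR-H-Safra trees with the index labels held fixed and applying Stirling. If anything you are more careful than the paper on the one delicate point: the paper simply asserts $|\bigcup_{S\subseteq Q} Q^{Sh}_{PT}| = 3(n(\mu+1)-1)!\,n!$ by citing the upper-bound count from Theorem~\ref{thm:complexity-NS2PT}, whereas you correctly flag that what is actually needed is a lower bound on the number of \emph{reachable} trees, i.e.\ the realizability of the shapes over the full Streett automaton.
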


Finally, we give the estimate for $|\bigcup\limits_{S\subseteq Q} Q^{Sh}_{PT}|$. Since the index label of each node is fixed, we can neglect the impact of the index label. Therefore, by the proof of Theorem \ref{thm:complexity-NS2PT}, we have
$$|\bigcup\limits_{S\subseteq Q} Q^{Sh}_{PT}| = 3(n(\mu+1)-1)!n!.$$
Specifically, $|\bigcup\limits_{S\subseteq Q} Q^{Sh}_{PT}| = 3(n(n+1)-1)!n! = 2^{\Omega(n^2 \log n)}$ for $k=\omega(n)$ by replacing $\mu$ with $n$, and $3(n(k+1)-1)!n! = 2^{\Omega(nk \log nk)}$ for $k=O(n)$ by replacing $\mu$ with $k$.

By the result in Section \ref{sec:NSA2DPTA}, the state complexity for the construction from NSA to DPTA is $3(n(n+1)-1)!(n!)^{n+1} = 2^{O(n^2 \log n)}$ for $k=\omega(n)$ and $3(n(k+1)-1)!n!k^{nk} = 2^{O(nk \log nk)}$ for $k=O(n)$. So, the above lower bound  is the same as the  upper bound in the exponent. There is still a slight gap between the lower and upper bounds.

\section{Conclusion} \label{Sec:con}
In this paper, we present determinization transformations from NSA with $n$ states and $k$ Streett pairs to DRTA with $n^{5n}(n!)^n$ states, $O(n^{n^2})$ Rabin pairs for $k=\omega(n)$ and $n^{5n}k^{nk}$ states, $O(k^{nk})$ Rabin pairs for $k=O(n)$; and to DPTA with $3(n(n+1)-1)!(n!)^{n+1}$ states, $2n(n+1)$ priorities for $k=\omega(n)$ and $3(n(k+1)-1)!n!k^{nk}$ states, $2n(k+1)$ priorities for $k=O(n)$.
Further, we prove a lower bound state complexity for determinization construction from NSA to DR(T)A, which matches the state complexity of the proposed  determinization construction.
Also, we put forward a lower bound state complexity for determinization construction from NSA to DP(T)A  which is the same as the proposed determinization construction in the exponent.

In the near future, we will implement the proposed determinization constructions and evaluate efficiency of the algorithms in practice.

\end{document}